\documentclass[cleveref,thm-restate]{lipics-v2021}
\hideLIPIcs
\nolinenumbers

\bibliographystyle{plainurl}

\usepackage{newunicodechar}
\usepackage{silence}
\WarningFilter{newunicodechar}{Redefining}

\newunicodechar{♢}{\tikz \node[inner sep=1.5,draw,diamond] {};}
\newunicodechar{☆}{\tikz \node[inner sep=1,draw,star,star point ratio=2] {};}
\newunicodechar{△}{\triangle}
\newunicodechar{⬜}{\kern 0.5pt\tikz \node[inner sep=1.7,draw,regular polygon,regular polygon sides=4] {};\kern 0.5pt}
\newunicodechar{◯}{\tikz[baseline=-3pt] \node[inner sep=1.7,draw,cloud,cloud puffs=4,cloud puff arc=190] {};}

\newunicodechar{⊥}{\bot}
\newunicodechar{•}{\item}
\newunicodechar{✓}{\checkmark}
\newunicodechar{✗}{\xmark}\def\xmark{\ding{55}}%
\newunicodechar{…}{\dots}
\newunicodechar{≔}{\coloneqq}
\newunicodechar{⁻}{^-}
\newunicodechar{⁺}{^+}
\newunicodechar{₋}{_-}
\newunicodechar{₊}{_+}
\newunicodechar{ℓ}{\ell}
\newunicodechar{•}{\item}
\newunicodechar{…}{\dots}
\newunicodechar{≔}{\coloneqq}
\newif\ifnormopen\normopenfalse
\newunicodechar{‖}{\ifnormopen\rVert\normopenfalse\else\rVert\normopentrue\fi}
\newunicodechar{≤}{\leq}
\newunicodechar{≥}{\geq}
\newunicodechar{≰}{\nleq}
\newunicodechar{≱}{\ngeq}
\newunicodechar{⊕}{\oplus}
\newunicodechar{⊗}{\otimes}
\newunicodechar{≠}{\neq}
\newunicodechar{¬}{\neg}
\newunicodechar{≡}{\equiv}
\newunicodechar{₀}{_0}
\newunicodechar{₁}{_1}
\newunicodechar{₂}{_2}
\newunicodechar{₃}{_3}
\newunicodechar{₄}{_4}
\newunicodechar{₅}{_5}
\newunicodechar{₆}{_6}
\newunicodechar{₇}{_7}
\newunicodechar{₈}{_8}
\newunicodechar{₉}{_9}
\newunicodechar{ₚ}{_p}
\newunicodechar{ₙ}{_n}
\newunicodechar{ₐ}{_a}
\newunicodechar{ₑ}{_e}
\newunicodechar{ₕ}{_h}
\newunicodechar{ₖ}{_k}
\newunicodechar{ₗ}{_l}
\newunicodechar{ₘ}{_m}
\newunicodechar{ₛ}{_s}
\newunicodechar{ₜ}{_t}
\newunicodechar{ₓ}{_x}
\newunicodechar{⁰}{^0}
\newunicodechar{¹}{^1}
\newunicodechar{²}{^2}
\newunicodechar{³}{^3}
\newunicodechar{⁴}{^4}
\newunicodechar{⁵}{^5}
\newunicodechar{⁶}{^6}
\newunicodechar{⁷}{^7}
\newunicodechar{⁸}{^8}
\newunicodechar{⁹}{^9}
\newunicodechar{ⁿ}{^n}
\newunicodechar{∈}{\in}
\newunicodechar{∉}{\notin}
\newunicodechar{⊂}{\subset}
\newunicodechar{⊃}{\supset}
\newunicodechar{⊆}{\subseteq}
\newunicodechar{⊇}{\supseteq}
\newunicodechar{⊄}{\nsubset}
\newunicodechar{⊅}{\nsupset}
\newunicodechar{⊈}{\nsubseteq}
\newunicodechar{⊉}{\nsupseteq}
\newunicodechar{∪}{\cup}
\newunicodechar{∩}{\cap}
\newunicodechar{∀}{\forall}
\newunicodechar{∃}{\exists}
\newunicodechar{∄}{\nexists}
\newunicodechar{∨}{\vee}
\newunicodechar{∧}{\wedge}
\newunicodechar{⊼}{\bar{\wedge}}
\newunicodechar{⊽}{\bar{\vee}}
\newunicodechar{ℝ}{\mathbb{R}}
\newunicodechar{ℙ}{\mathbb{P}}
\newunicodechar{ℕ}{\mathbb{N}}
\newunicodechar{𝔼}{\mathbb{E}}
\newunicodechar{𝔽}{\mathbb{F}}
\newunicodechar{ℤ}{\mathbb{Z}}
\newunicodechar{⌊}{\lfloor}
\newunicodechar{⌋}{\rfloor}
\newunicodechar{⌈}{\lceil}
\newunicodechar{⌉}{\rceil}
\newunicodechar{·}{\cdot}
\newunicodechar{∘}{\circ}
\newunicodechar{×}{\times}
\newunicodechar{↑}{\uparrow}
\newunicodechar{↓}{\downarrow}
\newunicodechar{→}{\rightarrow}
\newunicodechar{←}{\leftarrow}
\newunicodechar{⇒}{\Rightarrow}
\newunicodechar{⇐}{\Leftarrow}
\newunicodechar{↔}{\leftrightarrow}
\newunicodechar{⇔}{\Leftrightarrow}
\newunicodechar{↦}{\mapsto}
\newunicodechar{∅}{\emptyset}
\newunicodechar{∞}{\infty}
\newunicodechar{≅}{\cong}
\newunicodechar{≈}{\approx}
\newunicodechar{ℓ}{\ell}
\newunicodechar{𝟙}{\mathds{1}}
\newunicodechar{𝟘}{\mathds{0}}
\newunicodechar{↪}{\hookrightarrow}

\newunicodechar{α}{\alpha}
\newunicodechar{β}{\beta}
\newunicodechar{γ}{\gamma}
\newunicodechar{Γ}{\Gamma}
\newunicodechar{δ}{\delta}
\newunicodechar{Δ}{\Delta}
\newunicodechar{ε}{\varepsilon}
\newunicodechar{ζ}{\zeta}
\newunicodechar{η}{\eta}
\newunicodechar{θ}{\theta}
\newunicodechar{Θ}{\Theta}
\newunicodechar{ι}{\iota}
\newunicodechar{κ}{\kappa}
\newunicodechar{λ}{\lambda}
\newunicodechar{Λ}{\Lambda}
\newunicodechar{μ}{\mu}
\newunicodechar{ν}{\nu}
\newunicodechar{ξ}{\xi}
\newunicodechar{Ξ}{\Xi}
\newunicodechar{π}{\pi}
\newunicodechar{Π}{\Pi}
\newunicodechar{ρ}{\rho}
\newunicodechar{σ}{\sigma}
\newunicodechar{Σ}{\Sigma}
\newunicodechar{τ}{\tau}
\newunicodechar{υ}{\upsilon}
\newunicodechar{ϒ}{\Upsilon}
\newunicodechar{φ}{\varphi}
\newunicodechar{ϕ}{\phi}
\newunicodechar{Φ}{\Phi}
\newunicodechar{χ}{\chi}
\newunicodechar{ψ}{\psi}
\newunicodechar{Ψ}{\Psi}
\newunicodechar{ω}{\omega}
\newunicodechar{Ω}{\Omega}

\newunicodechar{𝒜}{\mathcal{A}}
\newunicodechar{ℬ}{\mathcal{B}}
\newunicodechar{𝒞}{\mathcal{C}}
\newunicodechar{𝒟}{\mathcal{D}}
\newunicodechar{ℰ}{\mathcal{E}}
\newunicodechar{ℱ}{\mathcal{F}}
\newunicodechar{𝒢}{\mathcal{G}}
\newunicodechar{ℋ}{\mathcal{H}}
\newunicodechar{ℐ}{\mathcal{I}}
\newunicodechar{𝒥}{\mathcal{J}}
\newunicodechar{𝒦}{\mathcal{K}}
\newunicodechar{ℒ}{\mathcal{L}}
\newunicodechar{ℳ}{\mathcal{M}}
\newunicodechar{𝒩}{\mathcal{N}}
\newunicodechar{𝒪}{\mathcal{O}}
\newunicodechar{𝒫}{\mathcal{P}}
\newunicodechar{𝒬}{\mathcal{Q}}
\newunicodechar{ℛ}{\mathcal{R}}
\newunicodechar{𝒮}{\mathcal{S}}
\newunicodechar{𝒯}{\mathcal{T}}
\newunicodechar{𝒰}{\mathcal{U}}
\newunicodechar{𝒱}{\mathcal{V}}
\newunicodechar{𝒲}{\mathcal{W}}
\newunicodechar{𝒳}{\mathcal{X}}
\newunicodechar{𝒴}{\mathcal{Y}}
\newunicodechar{𝒵}{\mathcal{Z}}
\newunicodechar{𝒶}{\mathcal{a}}
\newunicodechar{𝒷}{\mathcal{b}}
\newunicodechar{𝒸}{\mathcal{c}}
\newunicodechar{𝒹}{\mathcal{d}}
\newunicodechar{ℯ}{\mathcal{e}}
\newunicodechar{𝒻}{\mathcal{f}}
\newunicodechar{ℊ}{\mathcal{g}}
\newunicodechar{𝒽}{\mathcal{h}}
\newunicodechar{𝒾}{\mathcal{i}}
\newunicodechar{𝒿}{\mathcal{j}}
\newunicodechar{𝓀}{\mathcal{k}}
\newunicodechar{𝓁}{\mathcal{l}}
\newunicodechar{𝓂}{\mathcal{m}}
\newunicodechar{𝓃}{\mathcal{n}}
\newunicodechar{ℴ}{\mathcal{o}}
\newunicodechar{𝓅}{\mathcal{p}}
\newunicodechar{𝓆}{\mathcal{q}}
\newunicodechar{𝓇}{\mathcal{r}}
\newunicodechar{𝓈}{\mathcal{s}}
\newunicodechar{𝓉}{\mathcal{t}}
\newunicodechar{𝓊}{\mathcal{u}}
\newunicodechar{𝓋}{\mathcal{v}}
\newunicodechar{𝓌}{\mathcal{w}}
\newunicodechar{𝓍}{\mathcal{x}}
\newunicodechar{𝓎}{\mathcal{y}}
\newunicodechar{𝓏}{\mathcal{z}}

\usepackage{enumerate}
\usepackage{booktabs}
\usepackage{amsmath,amssymb,amsthm,mathtools}
\usepackage{bm}
\usepackage{subcaption}
\usepackage{pgfplots}
\pgfplotsset{compat=newest}

\usepgfplotslibrary{groupplots}
\usetikzlibrary{pgfplots.polar}
\usetikzlibrary{pgfplots.statistics}
\usetikzlibrary{arrows.meta}
\usetikzlibrary{overlay-beamer-styles}
\pgfplotsset{every axis/.style={scale only axis}}

\definecolor{veryLightGrey}{HTML}{F2F2F2}
\definecolor{lightGrey}{HTML}{DDDDDD}
\definecolor{colorSimdRecSplit}{HTML}{444444}
\definecolor{colorChd}{HTML}{377EB8}
\definecolor{colorRustFmph}{HTML}{A65628}
\definecolor{colorRustFmphGo}{HTML}{A65628}
\definecolor{colorSicHash}{HTML}{4DAF4A}
\definecolor{colorPthash}{HTML}{984EA3}
\definecolor{colorDensePtHash}{HTML}{377EB8}
\definecolor{colorRecSplit}{HTML}{FF7F00}
\definecolor{colorBbhash}{HTML}{F781BF}
\definecolor{colorShockHash}{HTML}{F8BA01}
\definecolor{colorBipartiteShockHash}{HTML}{F781BF}
\definecolor{colorBipartiteShockHashFlat}{HTML}{E41A1C}
\definecolor{colorBmz}{HTML}{000000}
\definecolor{colorBdz}{HTML}{444444}
\definecolor{colorFch}{HTML}{444444}
\definecolor{colorChm}{HTML}{A65628}
\definecolor{colorFiPS}{HTML}{FF7F00}
\definecolor{colorConsensus}{HTML}{4DAF4A}

\newlength{\squareWidth}
\newlength{\squareHeight}
\setlength{\squareWidth}{1.0\pgfplotmarksize}
\setlength{\squareHeight}{1.3\pgfplotmarksize}
\pgfdeclareplotmark{heatmapSquare}{%
  \pgfpathmoveto{\pgfqpoint{ \squareWidth}{ \squareHeight}}%
  \pgfpathlineto{\pgfqpoint{ \squareWidth}{-\squareHeight}}%
  \pgfpathlineto{\pgfqpoint{-\squareWidth}{-\squareHeight}}%
  \pgfpathlineto{\pgfqpoint{-\squareWidth}{ \squareHeight}}%
  \pgfpathclose%
  \pgfusepath{fill}
}

\pgfdeclareimage[interpolate,height=2.0mm,width=2.0mm]{shockhashMark}{fig/shockhash}
\pgfdeclareplotmark{shockhash}{\pgftext[at=\pgfpointorigin]{\pgfuseimage{shockhashMark}}}
\pgfdeclareimage[interpolate,height=1.85mm,width=1.85mm]{shockhashInverseMark}{fig/shockhashInverse}
\pgfdeclareplotmark{shockhashInverse}{\pgftext[at=\pgfpointorigin]{\pgfuseimage{shockhashInverseMark}}}
\pgfdeclareimage[interpolate,height=1.85mm,width=1.85mm]{phobicMark}{fig/phobic}
\pgfdeclareplotmark{phobic}{\pgftext[at=\pgfpointorigin]{\pgfuseimage{phobicMark}}}
\pgfdeclareimage[interpolate,height=1.85mm,width=1.85mm]{phobicInverseMark}{fig/phobicInverse}
\pgfdeclareplotmark{phobicInverse}{\pgftext[at=\pgfpointorigin]{\pgfuseimage{phobicInverseMark}}}
\pgfdeclareimage[interpolate,height=1.85mm,width=1.85mm]{fingerprintMark}{fig/hand}
\pgfdeclareplotmark{fingerprint}{\pgftext[at=\pgfpointorigin]{\pgfuseimage{fingerprintMark}}}
\pgfdeclareimage[interpolate,height=1.85mm,width=1.85mm]{fingerprintInverseMark}{fig/handInverse}
\pgfdeclareplotmark{fingerprintInverse}{\pgftext[at=\pgfpointorigin]{\pgfuseimage{fingerprintInverseMark}}}
\pgfdeclareimage[interpolate,height=2.0mm,width=2.0mm]{consensusMark}{fig/consensus}
\pgfdeclareplotmark{consensus}{\pgftext[at=\pgfpointorigin]{\pgfuseimage{consensusMark}}}

\pgfplotscreateplotcyclelist{myColorList}{%
  colorShockHash,mark=triangle\\
  lipicsGray,mark=pentagon\\%
  colorChd,mark=square\\%
  colorSicHash,mark=o\\%
  colorRecSplit,mark=diamond\\%
  colorPthash,mark=x\\%
  colorBbhash,mark=flippedTriangle\\%
}
\pgfdeclareplotmark{flippedTriangle}{%
  \pgfpathmoveto{\pgfqpointpolar{-90}{1.2\pgfplotmarksize}}%
  \pgfpathlineto{\pgfqpointpolar{30}{1.2\pgfplotmarksize}}%
  \pgfpathlineto{\pgfqpointpolar{150}{1.2\pgfplotmarksize}}%
  \pgfpathclose%
  \pgfusepath{stroke}
}
\pgfdeclareplotmark{rightTriangle}{%
  \pgfpathmoveto{\pgfqpointpolar{0}{1.2\pgfplotmarksize}}%
  \pgfpathlineto{\pgfqpointpolar{120}{1.2\pgfplotmarksize}}%
  \pgfpathlineto{\pgfqpointpolar{240}{1.2\pgfplotmarksize}}%
  \pgfpathclose%
  \pgfusepath{stroke}%
}
\pgfdeclareplotmark{leftTriangle}{%
  \pgfpathmoveto{\pgfqpointpolar{-180}{1.2\pgfplotmarksize}}%
  \pgfpathlineto{\pgfqpointpolar{-60}{1.2\pgfplotmarksize}}%
  \pgfpathlineto{\pgfqpointpolar{60}{1.2\pgfplotmarksize}}%
  \pgfpathclose%
  \pgfusepath{stroke}%
}

\definecolor{veryLightGrey}{HTML}{DDDDDD}

\pgfplotsset{
  mark repeat*/.style={
    scatter,
    scatter src=x,
    scatter/@pre marker code/.code={
      \pgfmathtruncatemacro\usemark{
        or(mod(\coordindex,#1)==0, (\coordindex==(\numcoords-1))
      }
      \ifnum\usemark=0
        \pgfplotsset{mark=none}
      \fi
    },
    scatter/@post marker code/.code={}
  },
  log x ticks with fixed point/.style={
      xticklabel={
        \pgfkeys{/pgf/fpu=true}
        \pgfmathparse{exp(\tick)}%
        \pgfmathprintnumber[fixed relative, precision=3]{\pgfmathresult}
        \pgfkeys{/pgf/fpu=false}
      }
  },
  log y ticks with fixed point/.style={
      yticklabel={
        \pgfkeys{/pgf/fpu=true}
        \pgfmathparse{exp(\tick)}%
        \pgfmathprintnumber[fixed relative, precision=3]{\pgfmathresult}
        \pgfkeys{/pgf/fpu=false}
      }
  },
  major grid style={thin,dotted},
  minor grid style={thin,dotted},
  ymajorgrids,
  yminorgrids,
  every axis/.append style={
    line width=0.9pt,
    tick style={
      line cap=round,
      thin,
      major tick length=4pt,
      minor tick length=2pt,
    },
    mark options={solid},
  },
  legend cell align=left,
  legend style={
    line width=0.7pt,
    /tikz/every even column/.append style={column sep=2mm,black},
    /tikz/every odd column/.append style={black},
    mark options={solid},
    font=\footnotesize,
  },
  title style={yshift=-2pt},
  enlarge x limits=0.04,
  every tick label/.append style={font=\footnotesize},
  every axis label/.append style={font=\small},
  every axis y label/.append style={yshift=-1ex},
  /pgf/number format/1000 sep={},
  axis lines*=left,
  xlabel near ticks,
  ylabel near ticks,
  axis lines*=left,
  label style={font=\footnotesize},
  tick label style={font=\footnotesize},
  cycle list name=myColorList,
}

\newif\iffullversion
\fullversiontrue

\usepackage[commentsnumbered,ruled,vlined]{algorithm2e} %

    \SetCommentSty{commentFont}
    \SetKwProg{algo}{Algorithm}{:}{}
    \usepackage{etoolbox}
    \patchcmd{\SetProgSty}{ArgSty}{ProgSty}{}{}
    \SetProgSty{}
    \SetKw{And}{and}
    \SetKw{Or}{or}
    \DontPrintSemicolon
\def\Ber{\mathrm{Ber}}
\def\Geom{\mathrm{Geom}}

\def\OPT{\mathrm{OPT}}
\def\Smin{\vec{S}_{\mathrm{MIN}}}
\def\OPTMPHF{\OPT_{\mathrm{MPHF}}}
\def\consensus{\texorpdfstring{C\scalebox{0.8}{ONSENSUS}}{CONSENSUS}\xspace}
\def\textrel#1#2{\stackrel{\text{#1}}{#2}}
\def\vecB{\vec{B}}
\def\B#1{B^{(#1)}}
\def\vB#1{\vecB^{(#1)}}

\newcommand{\psfrage}[1]{{\textcolor{blue}{[ps:#1]}}}
\newcommand{\hpfrage}[1]{{\color{violet}\sf[HP: #1]}}
\renewcommand{\psfrage}[1]{} \renewcommand{\hpfrage}[1]{}

\title{Combined Search and Encoding for Seeds,\texorpdfstring{\\ }{ }with an Application to Minimal Perfect Hashing}
\titlerunning{Combined Search and Encoding for Seeds}

\author{Hans-Peter Lehmann}{Karlsruhe Institute of Technology, Germany}{hans-peter.lehmann@kit.edu}{https://orcid.org/0000-0002-0474-1805}{}
\author{Peter Sanders}{Karlsruhe Institute of Technology, Germany}{sanders@kit.edu}{https://orcid.org/0000-0003-3330-9349}{}
\author{Stefan Walzer}{Karlsruhe Institute of Technology, Germany}{stefan.walzer@kit.edu}{https://orcid.org/0000-0002-6477-0106}{}
\author{Jonatan Ziegler}{Karlsruhe Institute of Technology, Germany}{jonatan.ziegler@student.kit.edu}{}{}
\authorrunning{Lehmann, Sanders, Walzer, Ziegler}
\Copyright{Hans-Peter Lehmann, Peter Sanders, Stefan Walzer, Jonatan Ziegler}

\keywords{Random Seed, Encoding, Bernoulli Process, Backtracking, Perfect Hashing.}
\ccsdesc{Theory of computation~Randomness, geometry and discrete structures}
\ccsdesc[500]{Theory of computation~Data compression}
\ccsdesc[500]{Theory of computation~Data structures design and analysis}%

\funding{
\flag[2cm]{}
    This project has received funding from the European Research Council (ERC) under the European Union’s Horizon 2020 research and innovation programme (grant agreement No. 882500).
    This work was also supported by funding from the pilot program Core Informatics at KIT (KiKIT) of the Helmholtz Association (HGF).
}
\acknowledgements{
    This paper is based on the bachelor's thesis of Jonatan Ziegler \cite{ziegler2025compacting}.
}

\supplementdetails[subcategory={Source code}]{Software}{https://github.com/ByteHamster/ConsensusRecSplit}
\supplementdetails[subcategory={Comparison with competitors}]{Software}{https://github.com/ByteHamster/MPHF-Experiments}

\EventEditors{Anne Benoit, Haim Kaplan, Sebastian Wild, and Grzegorz Herman}
\EventNoEds{4}
\EventLongTitle{33rd Annual European Symposium on Algorithms (ESA 2025)}
\EventShortTitle{ESA 2025}
\EventAcronym{ESA}
\EventYear{2025}
\EventDate{September 15--17, 2025}
\EventLocation{Warsaw, Poland}
\EventLogo{}
\SeriesVolume{351}
\ArticleNo{108}

\begin{document}

\maketitle
\begin{abstract}
    Randomised algorithms often employ methods that can fail and that are retried with independent randomness until they succeed. Randomised data structures therefore often store indices of successful attempts, called seeds.
    If $n$ such seeds are required (e.g., for independent substructures) the standard approach is to compute for each $i ∈ [n]$ the smallest successful seed $S_i$ and store $\vec{S} = (S₁,…,Sₙ)$.

    The central observation of this paper is that this is not space-optimal. We present a different algorithm that computes a sequence $\vec{S}' = (S₁',…,Sₙ')$ of successful seeds such that the entropy of $\vec{S'}$ undercuts the entropy of $\vec{S}$ by $Ω(n)$ bits in most cases. To achieve a memory consumption of $\OPT+εn$, the expected number of inspected seeds increases by a factor of $𝒪(1/ε)$.
    
    We demonstrate the usefulness of our findings with a novel construction for minimal perfect hash functions that, for $n$ keys and any $ε ∈ [n^{-3/7},1]$, has space requirement $(1+ε)\OPT$ and construction time $𝒪(n/ε)$. All previous approaches only support $ε = ω(1/\log n)$ or have construction times that increase exponentially with $1/ε$.
    Our implementation beats the construction throughput of the state of the art by more than two orders of magnitude for $ε ≤ 3\%$.
    
\end{abstract}

\newpage
\section{Introduction}
The construction of randomised data structures can sometimes fail, so the data structures need to store indices of successful attempts called \emph{seeds}.%
\footnote{Throughout this paper, we use the \emph{Simple Uniform Hashing Assumption}, as do many works before us \cite{dietzfelbinger1990new,pagh2007linear,pagh2008uniform,lehmann2023shockhash,lehmann2024shockhash2,hermann2024phobic,esposito2020recsplit}. This is equivalent to saying that we assume access to an infinite sequence of uniformly random and independent hash functions, each identified by its index (seed). The assumption is an adequate model for practical hash functions and can be justified in theory using the “split-and-share” trick \cite{dietzfelbinger2009applications}.}
In this paper, we present a technique that finds and encodes successful seeds in a space-efficient and time-efficient way.

When searching a single seed, we may imagine an infinite sequence of i.i.d.\ Bernoulli random variables, called a \emph{Bernoulli process}, that indicate which natural numbers lead to success and that can be inspected one by one.
To model the task of finding a \emph{sequence} of seeds we use a \emph{sequence} of Bernoulli processes as follows. In this paper $[n] \coloneq \{1,…,n\}$ and $ℕ₀ \coloneq ℕ\cup\{0\}$.

\begin{definition}[Seed Search and Encoding Problem (SSEP)]
    Let $n ∈ ℕ$ and $p₁,…,pₙ ∈ (0,1]$. Given for each $i ∈ [n]$ a Bernoulli process $\vB{i} = (\B{i}_j)_{j∈ℕ₀}$ with parameter $p_i$, the task is to craft a bitstring $M$ that encodes for each $i ∈ [n]$ a successful seed $S_i ∈ ℕ₀$, i.e.\ $\B{i}_{S_i} = 1$.   
\end{definition}
Our primary goals are to minimise the length of $M$ in bits, and to minimise for each $i ∈ [n]$ the number $T_i$ inspected Bernoulli random variables from $\vB{i}$. Moreover, the time to decode $S_i$ given $p₁,…,pₙ$, $M$ and $i ∈ [n]$ should be constant.
The optimal values for the expectation of $|M|$ and $T_i$ are the following
\iffullversion
(proved in \cref{sec:lower-bounds-analysis}).
\else
(for a proof see the full version of this paper \cite{lehmann2025combinedArxiv}).
\fi
\begin{equation}
    M^\OPT = \sum_{i = 1}^{n} \log₂(1/p_i) \text{ and } T_i^\OPT = 1/p_i \text{ for $i ∈ [n]$}.
    \label{eq:lower-bounds}
\end{equation}

\subsection{Ad-Hoc Solutions}\label{s:entropyDifference}
We briefly consider two natural strategies for solving the SSEP to build intuition.
The obvious strategy (MIN) is to encode the smallest successful seeds, i.e.\ define
\[ S_i \coloneq \min \{j ∈ ℕ₀ \mid \B{i}_j = 1\} \text{ and } \Smin \coloneq (S₁,…,Sₙ) \tag{MIN}\]
and let $M$ be an optimal encoding of $\Smin$. Perhaps surprisingly, this requires $Ω(n)$ bits more than the optimum if $p₁,…,pₙ$ are bounded away from $1$.
\def\seeApp{%
\iffullversion%
, proof in \cref{s:adhoc-proofs}%
\else%
, proof in full version \cite{lehmann2025combinedArxiv}%
\fi}
\begin{restatable}[MIN is fast but not space-efficient\seeApp]{lemma}{MINperf}\ \\
    \label{lem:ad-hoc:min}
    If $M$ encodes $\Smin$ then $𝔼[T_i] = T_i^\OPT$ for all $i ∈ [n]$ but $𝔼[|M|] ≥ M^\OPT + \sum_{i = 1}^{n} 1-p_i$.
\end{restatable}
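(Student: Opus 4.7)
The plan is to handle the two claims separately. For the inspection count $T_i$, the argument is immediate: computing $S_i = \min\{j : \B{i}_j = 1\}$ amounts to scanning $\B{i}_0, \B{i}_1, \ldots$ until the first $1$ appears, so $T_i = S_i + 1$. Since $S_i$ follows a geometric distribution on $\mathbb{N}_0$ with success probability $p_i$, one has $\mathbb{E}[S_i] = (1-p_i)/p_i$, and therefore $\mathbb{E}[T_i] = 1/p_i = T_i^{\OPT}$.

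For the length lower bound, the key is a Shannon-style argument. Because $M$ must losslessly encode $\Smin$ (with the parameters $p_1,\ldots,p_n$ treated as public context), any valid encoding is a uniquely decodable binary code for $\Smin$; Shannon's source-coding theorem (via Kraft--McMillan) therefore gives $\mathbb{E}[|M|] \geq H(\Smin)$, where $H$ denotes Shannon entropy in bits. The Bernoulli processes $\vB{1},\ldots,\vB{n}$ are independent, so the components of $\Smin$ are independent, and the entropy splits as $H(\Smin) = \sum_{i=1}^{n} H(S_i)$.

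What remains is a direct entropy computation for the geometric distribution. From $\mathbb{P}[S_i = k] = (1-p_i)^k p_i$ and $\mathbb{E}[S_i] = (1-p_i)/p_i$, a one-line calculation gives
\[ H(S_i) \;=\; -\mathbb{E}[S_i]\log_2(1-p_i) - \log_2 p_i \;=\; \log_2\!\tfrac{1}{p_i} \;+\; \tfrac{1-p_i}{p_i}\log_2\!\tfrac{1}{1-p_i}. \]
To finish, I would invoke the elementary inequality $\log_2(1/(1-x)) \geq x$ for $x \in [0,1)$ (which follows from $-\ln(1-x) \geq x$ combined with $\ln 2 < 1$) to conclude $\tfrac{1-p_i}{p_i} \log_2\!\tfrac{1}{1-p_i} \geq 1 - p_i$. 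Summing over $i$ then yields $\sum_i H(S_i) \geq M^{\OPT} + \sum_i (1 - p_i)$, which is exactly the claimed bound.

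No step is genuinely difficult --- the whole argument reduces to a Shannon lower bound plus a routine entropy calculation. The only point warranting attention is the legitimacy of the inequality $\mathbb{E}[|M|] \geq H(\Smin)$: one should verify that the paper's encoding model really corresponds to a uniquely decodable code, rather than merely an injective map to bitstrings of arbitrary length, so that the standard Shannon bound applies cleanly without losing lower-order terms. Should that subtlety matter, one can alternatively use the sharper entropy estimate $\tfrac{1-p_i}{p_i}\log_2\!\tfrac{1}{1-p_i} \geq (1-p_i)\log_2 e$, which leaves slack of order $\sum_i (1-p_i)$ to absorb any such loss.
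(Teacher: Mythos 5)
Your proof is correct and follows essentially the same route as the paper: $T_i \sim \Geom(p_i)$ gives the expectation immediately, and the space bound comes from $\mathbb{E}[|M|] \geq H(\Smin) = \sum_i H(S_i)$ together with the geometric-entropy formula and the inequality $-\ln(1-p)\geq p$. Your closing caveat about unique decodability is well taken and is in fact acknowledged in the paper's own footnote to the corresponding lower-bound lemma.
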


Another simple strategy (UNI) is to encode the smallest seed that is simultaneously successful for all Bernoulli processes, i.e.\ we define
\[ S_* = \min \{ j ∈ ℕ₀ \mid ∀i ∈ [n]: \B{i}_j = 1\}
\text{ and }
S_i = S_* \text{ for all $i ∈ [n]$}. \tag{UNI} \]
This yields a bitstring $M$ of minimal length but requires total time $\sum_{i = 1}^{n} T_i = \exp(Ω(n))$ if $p₁,…,pₙ$ are bounded away from $1$.
\begin{restatable}[UNI is space-efficient but slow\seeApp]{lemma}{UNIperf}\ \\ 
    \label{lem:ad-hoc:uni}
    If $M$ encodes $S_*$ then $𝔼[|M|] = M^\OPT + 𝒪(1)$ but $𝔼[\sum_{i = 1}^{n} T_i] ≥ \prod_{i = 1}^{n} 1/p_i \gg \sum_{i = 1}^{n} T_i^\OPT$.
\end{restatable}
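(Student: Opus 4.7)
The plan is to reduce both claims to the fact that $S_*$ is a geometric random variable. By the mutual independence of the processes, for every $j ∈ ℕ_0$ the event $\{∀i:\B{i}_j = 1\}$ has probability $\prod_{i=1}^n p_i =: q$, and these events are independent across $j$. Hence $S_* \sim \Geom(q)$ on $ℕ_0$.

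For the space bound, a direct calculation for the geometric distribution gives
\[
H(S_*) \;=\; \log_2(1/q) + \tfrac{1-q}{q}\log_2\tfrac{1}{1-q},
\]
whose second summand lies in $[0,\log_2 e]$ uniformly in $q ∈ (0,1]$ (tending to $\log_2 e$ as $q↓0$ and to $0$ as $q↑1$). Since $\log_2(1/q) = \sum_i \log_2(1/p_i) = M^\OPT$, letting $M$ be a near-entropy prefix-free code for $\Geom(q)$ — e.g.\ a Golomb--Rice code with parameter matched to $q$, which is also constant-time decodable — yields $𝔼[|M|] ≤ H(S_*) + 1 = M^\OPT + 𝒪(1)$, matching the general lower bound $M^\OPT$ from \eqref{eq:lower-bounds} up to $𝒪(1)$.

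For the time bound, I would argue adversarially. Any algorithm whose certified output satisfies $\B{i}_{S_*} = 1$ for every $i ∈ [n]$ must actually have inspected $\B{i}_{S_*}$, contributing $n$ inspections at position $S_*$. Furthermore, to certify that $S_*$ is the \emph{minimum} simultaneously-successful index, the algorithm must for every $j < S_*$ have read some $\B{i'}_j$ and observed it to be $0$, contributing at least one inspection per such $j$. Therefore $\sum_i T_i ≥ n + S_*$ pointwise, so
\[
𝔼\!\left[\textstyle\sum_{i=1}^n T_i\right] \;≥\; n + \tfrac{1-q}{q} \;≥\; \tfrac{1}{q} \;=\; \prod_{i=1}^n \tfrac{1}{p_i}.
\]
The strict dominance $\prod_i 1/p_i \gg \sum_i 1/p_i = \sum_i T_i^\OPT$ is then immediate: if all $p_i ≤ c < 1$, the left-hand side grows as $(1/c)^n$ while the right-hand side is only linear in $n$.

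The main technical subtlety lies in the $𝒪(1)$ slack of the space claim, which hinges on the uniform-in-$q$ bound on the entropy-correction term above; the displayed identity for $H(\Geom(q))$ makes this explicit. The time lower bound, by contrast, is a straightforward adversarial argument that applies to every correct implementation and not just to the natural sequential-search algorithm.
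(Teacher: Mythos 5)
Your proof is correct and follows essentially the same route as the paper: observe $S_* \sim \Geom(\prod_i p_i)$, bound the entropy of a geometric by $\log_2(1/q) + \log_2 e$ (your $[0,\log_2 e]$ claim for the correction term is exactly the paper's inequality $\tfrac{x\ln x}{x-1}\le 1$ in disguise), and lower-bound the work by noting that each of the roughly $S_*$ candidate positions forces at least one inspection. Your time bound $\sum_i T_i \ge n + S_*$ is marginally sharper than the paper's $\sum_i T_i \ge S_*+1$, but both yield $\ge 1/q$; the only cosmetic difference is that you name Golomb--Rice where the paper just says ``a suitable encoding.''
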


\subsection{The \consensus Algorithm}
We present an algorithm for the SSEP designed to simultaneously get close to both optima. We call it \emph{Combined Search and Encoding for Successful Seeds}, or \consensus for short. We summarise the guarantees of the stronger of two variants here.
\begin{theorem}[Main result, informal version of \cref{thm:full-consensus}]
    \label{thm:consise}
    For any $ε > 0$ the full \consensus algorithm with high probability solves the SSEP with $|M| ≤ M^\OPT+εn+𝒪(\log n)$ and $𝔼[T_i] = 𝒪(T_i^\OPT / ε)$ for all $i ∈ [n]$. Each seed $S_i$ is a bitstring of fixed length found at a predetermined offset in $M$.
\end{theorem}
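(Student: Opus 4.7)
The plan is to design CONSENSUS so that each seed $S_i$ occupies a fixed bit-block of length $\ell_i \approx \log_2(1/p_i) + \Theta(\varepsilon)$ in $M$ at the predetermined offset $\sum_{j<i}\ell_j$. To avoid losing an additive constant per slot to integer rounding (which would turn the $O(\log n)$ term into $\Theta(n)$), I would amortise the ceiling losses by packing the $n$ per-slot values into a single mixed-radix integer, i.e.\ view $M$ as the binary expansion of $\sum_i r_i \prod_{j<i} 2^{\ell_j}$ with fractional "block widths" $\ell_i = \log_2(1/p_i) + \varepsilon$. This makes $|M| = \sum_i \ell_i = M^\OPT + \varepsilon n$ deterministically, up to a single $O(\log n)$ overflow word for the final carry.

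At slot $i$, the algorithm enumerates raw seeds $r_i \in \{0,\dots,2^{\ell_i}-1\}$ and probes $\vec{B}^{(i)}$ at a position $\phi_i(r_i;\, r_1,\dots,r_{i-1})$ that depends on the already-committed context, accepting the first $r_i$ whose probe succeeds. If no $r_i$ in the block succeeds, the algorithm backtracks to slot $i-1$, picks its next successful raw seed, and re-enters slot $i$ with a new context. The critical invariant is that $\phi_i$ must spread different $(r_i,\text{context})$ pairs to genuinely independent positions in $\vec{B}^{(i)}$, so that each backtrack delivers fresh randomness to slot $i$.

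The space bound is then deterministic as above; the \emph{whp} qualifier refers to algorithmic success and is obtained by prepending an $O(\log n)$-bit header that selects one of $\mathrm{poly}(n)$ independent hash keys for $\phi_i$, since a constant fraction of keys yields a successful construction. For the time bound, the backtracking trace is a DFS on a random tree whose per-level children counts are $\mathrm{Bin}(2^{\ell_i}, p_i)$ with mean $\mu_i = 2^\varepsilon \geq 1+\Omega(\varepsilon)$. Since $\mu_i > 1$ by $\Omega(\varepsilon)$, the branching process is super-critical: a classical analysis gives non-extinction probability $Y^*=\Theta(\varepsilon)$ and, crucially, conditional expected size $O(1)$ of extinct subtrees (because the branching process conditioned on extinction is sub-critical with mean $\mu(1-Y^*)$ bounded away from $1$). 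Combining these factors yields $O(1/\varepsilon)$ expected re-visits per slot and $O(1/p_i)$ expected probes per re-visit, hence $\mathbb{E}[T_i] = O(T_i^\OPT/\varepsilon)$.

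The main obstacle I foresee is the DFS analysis in the near-critical regime $\varepsilon \to 0$: a careless application of branching-process bounds gives $O(1/\varepsilon^2)$ rather than $O(1/\varepsilon)$ per-slot time, and one must exploit the tight cancellation between the $\Theta(1/\varepsilon)$ extinction rate and the $\Theta(\varepsilon)$ conditional subtree mass to obtain the advertised bound. A secondary difficulty is verifying that choosing $\phi_i$ from a finite hash family preserves the i.i.d.\ Binomial model for children counts with only $O(\log n)$ bits of seed; this should follow from standard $k$-wise independence arguments, with $k$ only logarithmic in $n$.
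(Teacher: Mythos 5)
Your high-level plan---backtracking DFS over fixed-width seed fragments, amortised rounding to keep the additive overhead to $O(\log n)$, and a near-critical survival-probability argument giving $\Theta(\varepsilon)$---is the right one and matches the paper's in spirit. However, there are two genuine problems.

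First, the claim that extinct subtrees have conditional expected size $O(1)$ because the conditioned offspring mean $\mu(1-Y^*)$ is ``bounded away from~$1$'' is false: for a near-critical process with offspring mean $\mu = 1+\Theta(\varepsilon)$ and survival probability $Y^* = \Theta(\varepsilon)$, the conditioned mean is $1-\Theta(\varepsilon)$, which tends to $1$, so the conditional expected progeny is $\Theta(1/\varepsilon)$, not $O(1)$. You sense the danger yourself, but your proposed escape route does not work. Fortunately you do not need it. The cleaner accounting (which the paper uses) is that $T_i$, the total number of level-$i$ edges probed, is exactly geometric with parameter $p_i q_{i+1}$, where $q_{i+1}$ is the survival probability at layer $i+1$: each probed level-$i$ edge independently ``succeeds'' (is unbroken and leads into a surviving subtree) with probability $p_i q_{i+1}$, and the DFS stops probing level-$i$ edges after the first such success. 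This gives $\mathbb{E}[T_i] = 1/(p_i q_{i+1}) = O(T_i^{\OPT}/\varepsilon)$ directly, with no subtree-size bookkeeping. You still need $q_{i+1} = \Omega(\varepsilon)$ \emph{uniformly in $i$} for the non-homogeneous process in which $p_i k_i$ varies; the paper proves this by analysing compositions $f_a \circ \dots \circ f_{b-1}$ of the per-level maps $f_i(x) = 1-(1-p_i x)^{k_i}$ over windows of $\Theta(1/\varepsilon)$ consecutive levels, exploiting that the \emph{cumulative} products $\prod_{j\le i} p_j k_j 2^{-\varepsilon}$ are controlled. ``Classical branching-process theory'' does not apply directly here.

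Second, introducing a hash family $\phi_i(r_i; r_1,\dots,r_{i-1})$ and an $O(\log n)$-bit header to choose a good key is both unnecessary and at odds with the claim that $S_i$ is literally found at a predetermined offset in $M$. The paper instead takes integer block widths $\ell_i$ with $\sum_{j\le i}\ell_j = \lceil \varepsilon i + \sum_{j\le i}\log_2(1/p_j)\rceil$ (cumulative ceiling, which does the same amortisation as your mixed-radix idea without fractional widths), reserves a $w$-bit prefix $\sigma_0$, and defines $S_i$ to be the $w$-bit window of $M$ starting at offset $\sum_{j\le i}\ell_j$. These overlapping windows \emph{are} the probe positions---no hash is involved. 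The ``with high probability'' in the statement then refers purely to the Bernoulli randomness: full CONSENSUS fails only if $\sigma_0$ overflows $2^w$, or if the DFS ever probes two concatenations $\sigma_0\circ\dots\circ\sigma_i$ and $\sigma_0'\circ\dots\circ\sigma_i'$ sharing a $w$-bit suffix. The paper bounds both via a coupling to the idealised version with infinitely many $\sigma_0$ choices and a union bound over the $\mathrm{poly}(n)$ DFS steps, with $w = \Theta(\log n)$ sufficient.
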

Note that if the geometric mean of $p₁,…,pₙ$ is at most $\frac 12$ then $M^\OPT ≥ n$ and the leading term $εn$ in our space overhead is at most $ε M^\OPT$.%
\footnote{It is possible to get $𝔼[|M|] ≤ M^\OPT+ε\min(M^\OPT,n)+𝒪(\log n)$ and $𝔼[T_i] = 𝒪(T_i^\OPT/ε)$ in general. This requires work if many $p_i$ are close to $1$ such that $M^\OPT = o(n)$. In that case we can “summarise” subsequences of $p₁,…,pₙ$ using essentially the UNI strategy and then apply our main theorem to a shorter sequence $\tilde{p}₁,…,\tilde{p}_{n'}$ with $n' ≤ M^\OPT + 𝒪(1)$. We decided against integrating this improvement into our arguments as it interferes with clarity.}

\subparagraph{Intuition.} First reconsider the MIN-strategy. It involves the sequence $\Smin = (S₁,…,Sₙ)$ with some entropy $H(\Smin)$. There are two issues with storing $\Smin$. The superficial issue is that each number in the sequence follows a geometric distribution, and it is not obvious how to encode $\Smin$ in space close to $H(\Smin)$ while preserving fast random access to each component.\footnote{The standard solution \cite{witten1999managing} using Golomb-Rice coding \cite{golomb1966run,rice1979some} comes with $Ω(n)$ bits of overhead.}
The deeper issue is that $\Smin$ is not what a space-optimal approach should encode in the first place.
This is demonstrated by the fact that even an optimal encoding of $\Smin$ needs space $H(\Smin)$, which UNI undercuts by $Ω(n)$ bits.

\consensus deals with both issues simultaneously. The idea is to store for each $i ∈ [n]$ a natural number $1 ≤ σ_i ≤ 2^{ε}/p_i$. This allows for fixed width encoding of $σ_i$ using $ε + \log₂ 1/p_i$ bits. If we simply used $S_i = σ_i$ then the expected number of successful choices for $σ_i$ would be $2^ε > 1$. The problem would be, of course, that the actual number of successful choices might be zero for many $i ∈ [n]$. So instead we interpret $σ_i$ as a \emph{seed fragment} and define $S_i$ to be a concatenation of seed fragments up to and including $σ_i$. This way, if we run out of choices for $σ_i$, we can hope that a different successful choice for earlier seed fragments will allow for a successful choice for $σ_i$ as well. We can construct the sequence $(σ₁,…,σₙ)$ using backtracking.
We explain our algorithm in detail in \cref{sec:algorithm}.
The main technical challenge is to bound the additional cost this brings.

\subsection{Motivation: Seed Sequences in Randomised Data Structures}
\label{sec:motivation}

Randomised data structures frequently use seeds to initialise pseudo-random number generators or as additional inputs to hash functions.
Since a seed normally counts how often a construction has been restarted, a seed $S ∈ ℕ₀$ implies that work $Ω(S+1)$ has been carried out, meaning any reasonably fast algorithm yields seeds that are only a few bits to a few bytes in size. This does not, however, imply that seeds contribute only negligibly to overall memory consumption.
That is because some data structures partition their input into many small parts called \emph{buckets} and store a seed for each of these buckets. The seeds may even make up the majority of the stored data.
This is the case for many constructions of perfect hash functions, but also occurs in a construction of compressed static functions \cite{belazzougui2013compressed}, which, in turn, enable space-efficient solutions to the approximate membership problem and the relative membership problem.
The way we phrased the SSEP implicitly assumes that buckets are handled independently. We discuss a slight generalisation in
\iffullversion
\cref{sec:SSSEP}.
\else
the full version \cite{lehmann2025combinedArxiv}.
\fi

\subparagraph{Minimal Perfect Hash Functions.}
\label{sec:mphf-results}
As an application of \consensus we discuss the construction of minimal perfect hash functions (MPHFs).
An MPHF on a set $X$ of keys maps these keys to a range $[|X|]$ without collisions. The function does not necessarily have to store $X$ explicitly. The space lower bound is $\OPTMPHF = |X|\log₂(e)-𝒪(\log |X|)$ bits assuming the universe of possible keys has size $Ω(|X|^2)$ \cite{belazzougui2009hash,mehlhorn1982program,mairson1983program,fredman1984bounds}.
While a succinct construction has long been known \cite{hagerup2001efficient}, its space overhead is fixed to $ε=\Theta(\frac{\log² \log n}{\log n})$ and large in practice.
Many recent practical approaches for constructing MPHFs involve a brute force search for seed values that map certain subsets of the keys in favourable ways \cite{pibiri2021pthash,esposito2020recsplit,hermann2024phobic,lehmann2023shockhash,lehmann2024shockhash2}. %
Those that do, all combine aspects of the MIN-strategy (store smallest working seeds) and the UNI-strategy (let a single seed do lots of work)
and achieve a space budget of $(1+ε)·\OPTMPHF$ only in time $|X|·\exp(Ω(1/ε))$ (see \cref{s:relatedMphf}). In this paper we break this barrier for the first time.
While we give the details in \cref{s:mphf}, the clean version of our result is the following.
\begin{restatable}[Bucketed \consensus-RecSplit]{theorem}{bucketedCRS}
    \label{thm:improved-recsplit-bucketed}
    For any $ε ∈ [n^{-3/7},1]$ there is an MPHF data structure with space requirement $(1+𝒪(ε))\OPTMPHF$, expected construction time $𝒪(n/ε)$ and expected query time $𝒪(\log 1/ε)$.
\end{restatable}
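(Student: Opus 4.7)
The plan is to instantiate the standard \emph{RecSplit} framework for MPHFs and replace its implicit MIN-strategy for seed encoding with \consensus. Since RecSplit already tracks the success probability $p_i$ of every seed search, this is essentially a drop-in replacement, and it turns the well-known exponential-in-$1/ε$ construction time of RecSplit into the claimed $𝒪(n/ε)$ while retaining $(1+𝒪(ε))\OPTMPHF$ space.

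First, I would recall bucketed RecSplit: a first-level hash partitions the $n$ keys into buckets of average size $B$; within each bucket, keys are recursively split by hash functions whose seeds are searched so that the split is balanced, down to leaves of size $\ell=\Theta(\log(1/ε))$ that are resolved by brute-force MPHF search. Every splitting decision and every leaf search is an independent Bernoulli process, and standard RecSplit analysis shows (i) the total entropy $\sum_i\log_2(1/p_i)$ equals $(1+𝒪(ε))\OPTMPHF$ for a suitable choice of $\ell$ and splitting fan-outs, and (ii) $\sum_i T_i^\OPT=𝒪(n)$ because every key lies on an $𝒪(\log 1/ε)$-long root-to-leaf path with $𝒪(1)$ work per level. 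I would then enumerate the seeds $s_1,\dots,s_m$ in a canonical order (buckets in hash order, seeds in DFS order within each splitting tree) so they form a bona fide SSEP instance.

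Next I would invoke \cref{thm:consise} on this sequence. It produces a bitstring $M$ of length $M^\OPT+εm+𝒪(\log n)=(1+𝒪(ε))\OPTMPHF$ with high probability, and expected construction time $𝒪(\sum_i T_i^\OPT/ε)=𝒪(n/ε)$. For queries, the fixed-length, predetermined-offset property of \consensus gives $𝒪(1)$ access to each seed once the entropy prefix $\sum_{j<i}\log_2(1/p_j)$ along the query path is known. I would store one such prefix per bucket so that the $𝒪(\log 1/ε)$ seeds on a query path can be located and decoded in $𝒪(\log 1/ε)$ total time.

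The main obstacle is controlling lower-order terms tightly enough to cover the entire regime $ε\in[n^{-3/7},1]$. The additive $𝒪(\log n)$ overhead in \cref{thm:consise} only forces $εn=\Omega(\log n)$, which is much weaker than the stated bound; the tighter restriction arises from fluctuations of the per-bucket entropy $\sum_{i\in b}\log_2(1/p_i)$, variance of bucket sizes, and the auxiliary space needed to store per-bucket offsets, each contributing noise of order $\sqrt{n}$ up to polylogarithmic factors. Balancing this noise against $εn$ by tuning $B$, $\ell$, and the splitting degrees is where the exponent $3/7$ emerges and is the most delicate part of the proof. A secondary subtlety is that the Bernoulli processes driving the seed searches are only conditionally independent given the first-level hash assignment; this is handled by applying \cref{thm:consise} bucket by bucket (or, equivalently, conditioning on bucket contents), which preserves both its space and its time guarantees.
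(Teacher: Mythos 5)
Your plan is to keep original RecSplit's tree shape (multi-way splits down to leaves of size $\ell=\Theta(\log(1/\varepsilon))$ resolved by brute-force MPHF search) and drop in \consensus for the seed sequence. This is not what the paper does, and the reason is that your claim~(ii), $\sum_i T_i^\OPT=\mathcal{O}(n)$, is false for that tree. A brute-force MPHF search on a leaf of size $\ell$ has success probability $\ell!/\ell^\ell=e^{-\Theta(\ell)}$, so each leaf contributes $T_i^\OPT=e^{\Theta(\ell)}$ retries (not $\mathcal{O}(1)$; you appear to be reasoning about the $\mathcal{O}(\log 1/\varepsilon)$ query path, which is unrelated). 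With $\ell=\Theta(\log(1/\varepsilon))$, $\sum_i T_i^\OPT$ is $n\cdot\mathrm{poly}(1/\varepsilon)/\ell$, and \consensus only multiplies this by another $\mathcal{O}(1/\varepsilon)$ — it does not cancel the underlying $T^\OPT$. So you still have a polynomial (in $1/\varepsilon$) blowup rather than the claimed $\mathcal{O}(n/\varepsilon)$, and the whole point of the paper's Section~5 is to avoid exactly this bottleneck.

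The paper's construction discards the leaf brute-force entirely. Monolithic \consensus-RecSplit (\cref{thm:improved-recsplit-monolithic}) uses a balanced \emph{binary} splitting tree all the way to size-$1$ leaves; each split of $m$ keys succeeds with probability $\Theta(1/\sqrt{m})$, so $T^\OPT$ is $\Theta(\sqrt{m})$ rather than $e^{\Theta(m)}$, and the entropies telescope exactly via $\prod_\ell p(2^{d-\ell})^{2^\ell}=n!/n^n$. \consensus is what makes this viable at all: with MIN/Golomb-Rice the $n-1$ binary splits would cost $\Omega(1)$ extra bits each, i.e.\ $\Omega(n)$ total overhead. Even so, a flat application of \cref{thm:consise} with a single $\varepsilon$ would give construction time $\Theta(n^{3/2}/\varepsilon)$, because the work per level $\ell$ is $\Theta(n\sqrt{n_\ell}/\varepsilon)$ and the top level dominates. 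The paper therefore uses a level-dependent overhead $\varepsilon_\ell=\min(1,\varepsilon\,n_\ell^{3/4})$, carefully chosen so that both the per-level time and the per-level space form geometric series dominated by the bottom level. Bucketing then replaces the most expensive top levels with a minimal $k$-perfect hash for $k=\Theta(\varepsilon^{-4/3})$ — the exponent is exactly the threshold at which $\varepsilon_\ell$ hits $1$.

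Finally, your explanation of the exponent $3/7$ is also off the mark: there are no entropy or bucket-size fluctuations to control, since the $k$-perfect hash makes every bucket have size exactly $k$. The constraint $\varepsilon\ge n^{-3/7}$ comes solely from the $\le k-1$ leftover keys, which need a separate compact MPHF $F'$ whose size must fit in the $\mathcal{O}(\varepsilon n)$ budget; with $k\sim\varepsilon^{-4/3}$ this forces $\varepsilon^{-4/3}\le\varepsilon n$, i.e.\ $\varepsilon\ge n^{-3/7}$.
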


\subsection{Overview of the Paper}
The structure of the paper is as follows.
In \cref{sec:algorithm}, we explain \consensus in detail, as well as state the main theoretic results.
In \cref{sec:full-consensus-analysis,sec:simplified-consensus-analysis} we then prove the results.
We give an application of \consensus to minimal perfect hashing in \cref{s:mphf} and also briefly evaluate our practical implementation.
Finally, we conclude the paper in \cref{sec:conclusion}.

\section{Combined Search and Encoding for Successful Seeds}
\label{sec:algorithm}

In this section we present two variants of our \consensus algorithm. The first aims for conceptual clarity but produces seeds of length $Ω(n)$ bits that cannot be decoded efficiently. The second is more technical but solves this problem.

In the following we always assume that $n ∈ ℕ$, $p₁,…,pₙ ∈ (0,1]$ and $ε ∈ (0,1]$ are given and $\vB{i} = (\B{i}_j)_{j ∈ ℕ₀}$ is a Bernoulli process with parameter $p_i$ for all $i ∈ [n]$.

\subparagraph{A Branching Process.}
Let $k₁,…,kₙ ∈ ℕ$ be a sequence of numbers.\footnote{It may help to imagine that $k_i ≈ 2^ε/p_i$.} Consider the infinite tree visualised in \cref{fig:branching-process}. It has $n+2$ layers, indexed from $0$ to $n+1$. The root in layer~$0$ has an infinite number of children and nodes in layer $1 ≤ i ≤ n$ have $k_i$ children each, with $0$-based indexing in both cases. Edges between layer $i$ and $i+1$ are associated with the variables $\B{i}₀,\B{i}₁,\B{i}₂,…$ from top to bottom where $\B{i}_j = 0$ indicates a \emph{broken} edge depicted as a dotted line.
For convenience, we make another definition. For any $0 ≤ ℓ ≤ n$ and any sequence $(σ₀,σ₁,…,σ_ℓ) ∈ ℕ₀ × \{0,…,k₁-1\} × … × \{0,…,k_ℓ-1\}$ consider the node reached from the root by taking in layer $0 ≤ i ≤ ℓ$ the edge to the $σ_i$-th child. We define $σ₀∘…∘σ_ℓ$ to be the index of the reached node within layer $ℓ$.

\begin{figure}[bt]
    \centering
    \includegraphics[scale=0.9]{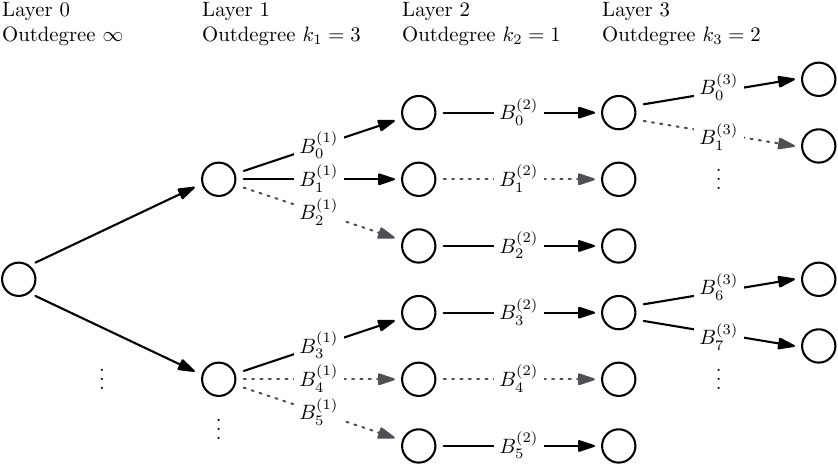}
    \caption{A visualisation of the tree underlying the search of the simplified \consensus algorithm.}
    \label{fig:branching-process}
\end{figure}

\subparagraph{The Simplified \consensus Algorithm.}
In \consensus, we search for a non-broken path from the root of the tree to a leaf node using a DFS, see \cref{algo:simplified-consensus}. Note that such a path exists with probability $1$ because the root has infinitely many children. The path is given by the sequence of choices $(σ₀,…,σₙ) ∈ ℕ₀ × \{0,…,k₁-1\} × … × \{0,…,kₙ-1\}$ made in each layer.
The returned sequence can be encoded as a bitstring $M$. Then $S_i \coloneq σ₀∘…∘σ_i$ is a successful seed for each $i ∈ [n]$ by construction. The merits of the approach are summarised in the following theorem.

\begin{figure}
    \setlength{\algomargin}{0em}
    \SetInd{5pt}{5pt}
    \begin{minipage}[t]{0.5\textwidth}
        \begin{algorithm}[H]
            \KwIn{$n,(p₁,…,pₙ),(k₁,…,kₙ)$ and\\
            \quad $\B{i}₀,\B{i}₁,… \sim \Ber(p_i)$ for all $i ∈ [n]$.}
            \KwOut{$(σ₀,…,σ_n)$ with $0 ≤ σ_i < k_i$
            \\\quad such that $\B{i}_{σ₀∘σ₁∘…∘σ_i} = 1$ for all $i ∈ [n]$.}

            \algo{Simplified-\consensus}{
                \For{$σ₀ ∈ \{0,1,2,…\}$}{
                    $(σ₁,i) ← (0,1)$\;
                    \While{$i > 0$}{
                        \uIf{$\B{i}_{σ₀∘…∘σ_i} = 1$}{
                            \If{$i = n$}{
                                \Return $M = (σ₀,…,σₙ)$\;
                            }
                            $i ← i+1$\;
                            $σ_i ← 0$\;
                        }\Else(\tcp*[h]{backtrack and/or next seed}){
                            \While{$i > 0$ \And $σ_i = k_i-1$}{
                                $i ← i-1$\;
                            }
                            \If{$i > 0$}{
                                $σ_i ← σ_i+1$
                            }
                        }
                    }
                }
            }
            \caption{Simplified \consensus.}
            \label{algo:simplified-consensus}
        \end{algorithm}
    \end{minipage}
    \begin{minipage}[t]{0.5\textwidth}
        \begin{algorithm}[H]
            \KwIn{$n,w,(p₁,…,pₙ),(ℓ₁,…,ℓₙ)$ and\\\quad $\B{i}₀,…,\B{i}_{2^w-1} \sim \Ber(p_i)$ for all $i ∈ [n]$.}
            \KwOut{either $⊥$ or $M$ as in \cref{fig:full-consensus-notation}\\\quad such that $\B{i}_{S_i} = 1$ for all $i ∈ [n]$.}
            
            \algo{Full-\consensus}{
                \For(\tcp*[h]{bounded}){$σ₀ ∈ \{0,1,…,2^{w}-1\}$}{
                    $(σ₁,i) ← (0,1)$\;
                    \While{$i > 0$}{
                        \uIf(\tcp*[h]{$S_i$ as in \cref{fig:full-consensus-notation}}){$\B{i}_{S_i} = 1$}{
                            \If{$i = n$}{
                                \Return $M = (σ₀,…,σₙ)$\;
                            }
                            $i ← i+1$\;
                            $σ_i ← 0$\;
                        }\Else(\tcp*[h]{backtrack and/or next seed}){
                            \While{$i > 0$ \And $σ_i = 2^{ℓ_i}-1$}{
                                $i ← i-1$\;
                            }
                            \If{$i > 0$}{
                                $σ_i ← σ_i+1$
                            }
                        }
                    }
                }
                \Return $⊥$ \tcp{can fail}
            }
            \caption{Full \consensus.}
            \label{algo:full-consensus}
        \end{algorithm}
    \end{minipage}
\end{figure}

\begin{theorem}[Performance of Simplified \consensus]
    \label{thm:simplified-consensus}
    Given $n ∈ ℕ$, $p₁,…,pₙ ∈ (0,1]$ and $ε ∈ (0,1]$ let $k₁,…,kₙ ∈ ℕ$ be a sequence of numbers satisfying\footnote{Ideally we would want $k_j \coloneq \frac{2^ε}{p_j}$ but the integer constraint on the $k_j$ forces us to be more lenient.}
    $\prod_{j = 1}^{i} p_j k_j 2^{-ε} ∈ [1,2]$ for all $i ∈ [n]$.
    Then \cref{algo:simplified-consensus} solves the SSEP with $𝔼[T_i] = 𝒪(T_i^\OPT / ε)$ and $𝔼[|M|] ≤ M^\OPT+εn+\log₂ 1/ε + 𝒪(1)$.
\end{theorem}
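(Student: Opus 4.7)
The plan is to bound $\mathbb{E}[|M|]$ and $\mathbb{E}[T_i]$ separately. For the space bound, write $M = (\sigma_0, \sigma_1, \ldots, \sigma_n)$ and encode the tail $(\sigma_1, \ldots, \sigma_n)$ as a single mixed-radix integer in $\{0, \ldots, \prod_i k_i - 1\}$; this takes $\lceil \log_2 \prod_i k_i \rceil$ bits, which by the hypothesis $\prod_{j=1}^n p_j k_j \leq 2 \cdot 2^{n\varepsilon}$ is at most $M^{\OPT} + \varepsilon n + \mathcal{O}(1)$. The head $\sigma_0$ counts failed outer-loop trials and is therefore geometrically distributed with parameter $q$ equal to the per-subtree success probability; a Golomb code tuned to parameter $\Theta(1/\varepsilon)$ encodes it in $\log_2(1/q) + \mathcal{O}(1)$ expected bits. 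It remains to show $q = \Omega(\varepsilon)$.

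To show $q \geq \Omega(\varepsilon)$, I would apply the second-moment method to $Z$, the number of fully-alive root-to-leaf paths in a single subtree. We have $\mathbb{E}[Z] = \prod_j p_j k_j \geq 2^{n\varepsilon} \geq 1$. Classifying pairs of paths by the depth $j$ of their last common ancestor and writing $P_j = \prod_{\ell \leq j} p_\ell k_\ell$, each class contributes at most $\mathbb{E}[Z]^2 / P_j$ to $\mathbb{E}[Z^2]$. The partial-product hypothesis $P_j \geq 2^{j\varepsilon}$ turns the sum over $j$ into a geometric series bounded by $\mathcal{O}(1/(1 - 2^{-\varepsilon})) = \mathcal{O}(1/\varepsilon)$, yielding $\mathbb{E}[Z^2] \leq \mathbb{E}[Z] + \mathcal{O}(1/\varepsilon) \cdot \mathbb{E}[Z]^2$, and Paley--Zygmund then gives $q \geq \mathbb{E}[Z]^2 / \mathbb{E}[Z^2] = \Omega(\varepsilon)$.

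For the inspection counts I would first observe that whether the algorithm ever inspects a specific variable $\B{i}_m$ depends only on other Bernoulli variables (the decision to inspect index $m$ is made before $\B{i}_m$ is read). Hence if $T_i^{(1)}$ denotes the number of level-$i$ inspections returning $1$, then $\mathbb{E}[T_i^{(1)}] = p_i \, \mathbb{E}[T_i]$. Next comes a global accounting between adjacent levels: every level-$i$ inspection returning $1$ causes the DFS to visit $C_v \in \{1, \ldots, k_{i+1}\}$ children at level $i+1$, and $C_v$ can be strictly less than $k_{i+1}$ only for the unique level-$i$ ancestor of the eventually returned success (at every other alive level-$i$ node, the entire subtree below is explored exhaustively before the DFS backtracks). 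This yields
\[ T_{i+1} \geq k_{i+1}\bigl(T_i^{(1)} - 1\bigr) + 1, \qquad \text{hence} \qquad \mathbb{E}[T_i^{(1)}] \leq \frac{\mathbb{E}[T_{i+1}^{(1)}]}{p_{i+1} k_{i+1}} + 1, \]
with base case $\mathbb{E}[T_n^{(1)}] = 1$ because the algorithm terminates on the very first level-$n$ inspection that returns $1$. Telescoping back from $i = n$ and bounding each coefficient $P_i/P_\ell \leq 2 \cdot 2^{-(\ell - i)\varepsilon}$ via the partial-product hypothesis makes the resulting sum geometric in $2^{-\varepsilon}$ and gives $\mathbb{E}[T_i^{(1)}] = \mathcal{O}(1/\varepsilon)$, whence $\mathbb{E}[T_i] = \mathcal{O}(1/(\varepsilon p_i)) = \mathcal{O}(T_i^{\OPT}/\varepsilon)$.

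The hard part, I expect, is the structural claim that yields the recurrence for $T_i^{(1)}$. The ``$-1$'' in $T_{i+1} \geq k_{i+1}(T_i^{(1)} - 1) + 1$ is a global saving applied once across the whole execution, not once per outer-loop trial; this is what produces an upper bound on $\mathbb{E}[T_i^{(1)}]$ that is uniform in $i$ and $n$. A more naive per-trial analysis that counted level-$i$ nodes visited inside one subtree would give $\mathcal{O}(2^{i\varepsilon}/p_i)$ per trial and, after multiplying by $1/q = \mathcal{O}(1/\varepsilon)$ trials, only $\mathcal{O}(2^{i\varepsilon}/(\varepsilon p_i))$, which for large $i$ is exponentially weaker than the claimed bound. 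Isolating that at most a single level-$i$ node per level can end a DFS branch early without exhausting all children is therefore the crucial step.
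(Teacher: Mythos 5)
Your proposal is correct, but it takes a genuinely different route from the paper, in both halves of the argument.

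For the lower bound $q_1 = \Omega(\varepsilon)$ needed to control $\sigma_0$, you use the second-moment method on the number $Z$ of alive root-to-leaf paths in a fixed subtree, classifying pairs of paths by the depth of their last common ancestor. The bound $P_j \ge 2^{j\varepsilon}$ makes the off-diagonal contribution a geometric series summing to $\mathcal{O}(1/\varepsilon)$, and Paley--Zygmund then gives $q_1 \ge \mathbb{E}[Z]^2/\mathbb{E}[Z^2] = \Omega(\varepsilon)$. This is sound. The paper instead defines $q_i$ as the survival probability of the branching process from layer $i$ downward, observes $q_i = f_i(q_{i+1})$ for $f_i(x) = 1-(1-p_i x)^{k_i}$, and proves the stronger uniform bound $q_i = \Omega(\varepsilon)$ for \emph{all} $i$ by analysing compositions $f_a \circ \cdots \circ f_{b-1}$ over windows of length $\Theta(1/\varepsilon)$ (\cref{lem:simplified-consensus,lem:iterative-applications-of-functions}). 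Your second-moment bound only delivers $q_1$, which is all the space bound needs; the paper's iterated-function lemma is heavier machinery but is reused directly for the time bound.

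For the time bound, the paper's argument is shorter: it observes that each level-$i$ inspection independently starts a subtree that survives with probability $q_{i+1}$, so $T_i \sim \Geom(p_i q_{i+1})$ and $\mathbb{E}[T_i] = 1/(p_i q_{i+1}) = \mathcal{O}(T_i^\OPT/\varepsilon)$ by the uniform bound on $q_{i+1}$. Your accounting argument instead exploits the fact that within a single run, every alive level-$i$ node except the unique ancestor of the returned leaf has its full complement of $k_{i+1}$ children visited, giving the pointwise inequality $T_{i+1} \ge k_{i+1}(T_i^{(1)}-1)+1$; combined with $\mathbb{E}[T_i^{(1)}] = p_i\,\mathbb{E}[T_i]$ (which holds because the decision to inspect $B^{(i)}_m$ is measurable with respect to variables independent of $B^{(i)}_m$) and the base case $T_n^{(1)} = 1$ a.s., this telescopes to $\mathbb{E}[T_i^{(1)}] \le \sum_{\ell \ge i} P_i/P_\ell = \mathcal{O}(1/\varepsilon)$ by the same geometric-series bound. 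This route entirely avoids any reference to the survival probabilities $q_i$ beyond $q_1$, which is an elegant structural simplification, at the price of the more delicate global bookkeeping you correctly flag as the crux. One cosmetic point: using a Golomb code for $\sigma_0$ (as you do) is cleaner than the paper's informal $\lceil\log_2(1+\sigma_0)\rceil$ bound, since the Golomb code is genuinely prefix-free; both give the claimed $\log_2(1/\varepsilon)+\mathcal{O}(1)$ expected bits.
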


\subparagraph{Full \consensus Algorithm.}
The main issues with the simplified \consensus algorithm are that $(σ₀,…,σₙ)$ is awkward to encode in binary and that using $σ₀∘…∘σ_i$ directly as the seed $S_i$ makes for seeds of size $Ω(n)$ bits, which are inefficient to compute and handle. We solve these issues with three interventions.
\begin{enumerate}
    • We introduce a word size $w ∈ ℕ$ and encode $σ₀$ using $w$ bits (hoping that it does not overflow, see below).
    • We demand that $k₁,…,kₙ ∈ ℕ$ are powers of two, i.e.\ $k_i = 2^{ℓ_i}$ for $ℓ_i ∈ ℕ₀$. With our definition of “$∘$”, the binary representation of the number $σ₀∘…∘σ_i$ now simply arises by concatenating the binary representations of $σ₀,…,σ_i$ (where $σ_j$ is encoded using $ℓ_j$ bits and $σ₀$ is encoded using $w$ bits).
    • Instead of defining $S_i$ to be $σ₀∘…∘σ_i$, we define $S_i$ to be the $w$-bit suffix of $σ₀∘…∘σ_i$.
    We will show that this truncation is unlikely to cause collisions, i.e.\ any two sequences $(σ₀,…,σ_i)$ and $(σ'₀,…,σ'_i)$ we encounter have distinct suffixes $S_i ≠ S_i'$.
    Intuitively, this guarantees that the search algorithm will always see “fresh” random variables and does not “notice” the change.    
\end{enumerate}

\begin{figure}[tb]
    \centering
    \includegraphics[scale=0.9]{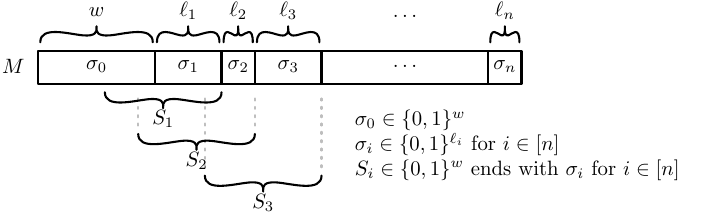}
    \caption{The bitstring $M$ returned by the full \consensus algorithm and the notation used to refer to relevant substrings.}
    \label{fig:full-consensus-notation}
\end{figure}

The full algorithm is given as \cref{algo:full-consensus}, using notation defined in \cref{fig:full-consensus-notation}.
Our formal result is as follows.

\begin{theorem}[Performance of Full \consensus]
    \label{thm:full-consensus}
    Let $c > 0$ be an arbitrary constant. Given $n ∈ ℕ$, $p₁,…,pₙ ∈ (n^{-c},1]$ and $ε ∈ (n^{-c},1]$, let $ℓ₁,…,ℓₙ ∈ ℕ₀$ be the unique sequence of numbers such that
    \begin{gather*}
        \sum_{j = 1}^i ℓ_j \coloneq \Big{⌈}εi + \sum_{j = 1}^i \log₂(1/p_i)\Big{⌉} \text{ for $i ∈ [n]$.}
    \end{gather*}
    Assume we run \cref{algo:full-consensus} with these parameters and $w ≥ (2+7c)\log₂ n$.\footnote{We did not attempt to improve the constant “$2+7c$” and expect that $w = 64$ is enough in practice.} Then there is an event $E$ with $\Pr[E] = 1-𝒪(n^{-c})$ that implies that the run solves the SSEP, producing a bitstring $M$ with $|M| = ⌈M^\OPT+εn+w⌉$ containing $S_i ∈ \{0,1\}^w$ at offset $\sum_{j = 1}^i ℓ_j$ for all $i ∈ [n]$. Moreover, $𝔼[T_i \mid E] = 𝒪(T_i^\OPT/ε)$ for all $i ∈ [n]$.
\end{theorem}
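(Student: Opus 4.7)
My plan is to derive \cref{thm:full-consensus} from \cref{thm:simplified-consensus} by matching the trajectories of the two algorithms on a suitable good event. First observe that $k_i := 2^{\ell_i}$ satisfies the hypothesis of \cref{thm:simplified-consensus}: the defining ceiling for $\sum_{j\leq i}\ell_j$ gives $\prod_{j=1}^i p_j k_j\, 2^{-\varepsilon i} = \prod_{j=1}^i p_j \cdot 2^{\sum_{j\leq i}\ell_j - \varepsilon i} \in [1,2]$.

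Let $E$ be the event that the simplified algorithm, run on an independent copy of the Bernoulli processes, (i) succeeds with $\sigma_0 < 2^w$, and (ii) never visits two distinct level-$i$ addresses that agree modulo $2^w$, for any $i \in [n]$. Call a possible DFS trajectory \emph{good} if it satisfies (i) and (ii). The crucial observation is that for any fixed good $\tau$, the probability the \emph{full} algorithm takes $\tau$ equals the probability the \emph{simplified} algorithm takes $\tau$: both are the product of Bernoulli probabilities over the queries along $\tau$, and condition (ii) guarantees that in either model these queries read pairwise distinct Bernoulli variables. Summing over good $\tau$ therefore gives $\Pr[\text{Full takes a good trajectory}] = \Pr[E]$. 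On this event Full returns $M = (\sigma_0,\ldots,\sigma_n)$ with $|M| = w + \sum_j \ell_j = w + \lceil M^{\OPT} + \varepsilon n \rceil = \lceil M^{\OPT} + \varepsilon n + w \rceil$ and with $S_i$ at offset $\sum_{j\leq i}\ell_j$ as depicted in \cref{fig:full-consensus-notation}; and $\mathbb{E}[T_i \mid E] \leq \mathbb{E}[T_i^{\mathrm{Simp}}]/\Pr[E] = \mathcal{O}(T_i^{\OPT}/\varepsilon)$ follows from \cref{thm:simplified-consensus}, provided $\Pr[E]$ is bounded away from $0$.

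The technical core is thus $\Pr[\bar E] = \mathcal{O}(n^{-c})$, which I split across the two failure modes. For overflow (violation of (i)): every $\sigma_0$ Simplified tries forces at least one level-$1$ query, so $\sigma_0^{\mathrm{final}} \leq T_1^{\mathrm{Simp}}$; \cref{thm:simplified-consensus} gives $\mathbb{E}[T_1^{\mathrm{Simp}}] = \mathcal{O}(1/(\varepsilon p_1)) \leq \mathcal{O}(n^{2c})$ under $\varepsilon,p_1 \geq n^{-c}$, and Markov with $2^w \geq n^{2+7c}$ yields $\mathcal{O}(n^{-2-5c})$. For collisions (violation of (ii)), I would use Markov on $\mathbb{E}[\sum_i T_i^{\mathrm{Simp}}] \leq \mathcal{O}(n^{1+2c})$ to confine the total number of visited addresses to $N^{\star} := n^{1+3c}$ except with probability $\mathcal{O}(n^{-c})$, and then union-bound over the $\leq (N^{\star})^2$ pairs of visited addresses at each level. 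The main obstacle I anticipate is justifying that any specific pair of visits collides modulo $2^w$ with probability $\mathcal{O}(2^{-w})$. I would establish this by splitting on the DFS-level $j$ at which the two paths diverge: if $\sigma_j$'s bit-contribution lies entirely within the $w$-bit suffix window the suffixes necessarily differ and the pair contributes nothing; otherwise, the ``high'' bits that must align modulo $2^w$ are governed by independent Bernoulli-driven choices at the ancestor levels, yielding the $2^{-w}$ factor. Summing, $\mathcal{O}((N^{\star})^2 \cdot 2^{-w}) = \mathcal{O}(n^{2+6c - (2+7c)}) = \mathcal{O}(n^{-c})$, as required.
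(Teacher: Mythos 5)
Your overall architecture mirrors the paper's: verify that $k_i = 2^{\ell_i}$ satisfies the hypothesis of \cref{thm:simplified-consensus}, define a good event $E$ on which the simplified DFS trajectory involves no $\sigma_0$-overflow and no repeated $w$-bit suffix, couple the two algorithms on that event, bound $\Pr[\bar E]$, and read off the bounds on $|M|$ and $\mathbb{E}[T_i \mid E]$. The $k_i$ verification, the overflow bound via $\sigma_0^{\mathrm{final}} \leq T_1^{\mathrm{Simp}}$ (the paper instead bounds $\sigma_0$ by the total step count, but your route works too), and the final conclusions are all fine. The trajectory-matching formulation of the coupling is a legitimate restatement of the paper's explicit coupling.

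The place you correctly identify as the ``main obstacle'' is where your proposal has a genuine gap. You want a per-pair bound: any two visited level-$i$ addresses share a $w$-bit suffix with probability $\mathcal{O}(2^{-w})$. This is not true, and your sketch does not establish it. If the two paths to $\Sigma$ and $\Sigma'$ diverge at level $j$, the seed fragments $\sigma_{j+1},\ldots,\sigma_i$ and $\sigma_{j+1}',\ldots,\sigma_i'$ determining the low-order bits are generated by i.i.d.\ but far-from-uniform DFS explorations of two disjoint subtrees (the DFS strongly prefers small child indices), so the collision probability is $\|\mu\|_2^2$ for a heavily skewed distribution $\mu$, which can be orders of magnitude larger than $2^{-w}$. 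Moreover, the phrase ``independent Bernoulli-driven choices at the \emph{ancestor} levels'' cannot be made precise as written: the ancestor levels are exactly where the two paths agree, so no fresh randomness lives there. The paper sidesteps this entirely with \cref{claim:repeating-suffix-prob}, which bounds the probability that \emph{step} $t$ repeats any previously seen suffix by $t/2^w$. Its proof is not a per-pair independence argument but an exchangeability argument: conditioned on $(i,\Sigma^*)$, the set of blocked suffixes accumulated so far is distributed symmetrically over $\{0,1\}^w$ because the exhaustively-explored portion of the tree is order-insensitive and the Bernoulli variables are i.i.d.; since at most $t$ suffixes can be blocked, the one in question is blocked with probability at most $t/2^w$. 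Summing $\sum_{t \leq T_{\mathrm{limit}}} t/2^w$ then gives the same $\mathcal{O}(n^{2+6c}/2^w)$ bound you aimed for, but the per-step symmetry claim does the work that per-pair uniformity cannot. To close your gap you would need to replace the pairwise uniformity argument with this symmetry claim (or an equivalent).
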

In particular the result guarantees that $S_i$ can be inferred efficiently given $M$ and $i$ provided that $\sum_{j = 1}^i \log₂(1/p_i)$ can be computed efficiently given $i$. In practice it is fine to use approximations of $p_i$.\footnote{Our implementation uses fixed point arithmetic (integers counting “microbits”) to avoid issues related to floating point arithmetic not being associative and distributive.}
Note that the expectation of $T_i$ is only bounded conditioned on a good event $E$, which is defined in \cref{sec:full-consensus-analysis}. To obtain an unconditionally bounded expectation, \cref{algo:full-consensus} should be modified to give up early in failure cases.\footnote{To do so we need not decide whether $E$ occurs. We can simply
count the number of steps and abort when a limit is reached.
}

\subparagraph{Special Cases and Variants.}
An important special case is $p₁ = … = pₙ = p$. With $b = \log₂ 1/p$ the space lower bound is then simply $M^\OPT = bn$ bits. If we apply \cref{thm:full-consensus} with $ε ∈ (0,1)$ then the lengths $ℓ₁,…,ℓₙ$ of the seed fragments $σ₁,…,σₙ$ are all either $⌈b+ε⌉$ or $⌊b+ε⌋$ with an average approaching $b+ε$. Assuming $b ∉ ℕ$ it may be tempting to use $ε = ⌈b⌉-b$ such that all seed fragments have the same length $b+ε ∈ ℕ$. Similarly, we could pick $ε$ such that $b+ε$ is a “nice” rational number, e.g.\ $b+ε = b'+\frac 12$ for $b' ∈ ℕ$ such that $ℓ₁,…,ℓₙ$ alternate between $b'+1$ and $b'$.

In some cases it may be reasonable to choose $ε > 1$. Even though \consensus ceases to be more space efficient than MIN, it still has the advantage that seeds are bitstrings of fixed lengths in predetermined locations. We decided against formally covering this case in \cref{thm:full-consensus}, but argue for the following modification in
\iffullversion
\cref{sec:large-eps}.
\else
the full version \cite{lehmann2025combinedArxiv} of this paper.
\fi

\def\strike#1{%
    \begin{tikzpicture}[baseline=(t.base)]
        \node[inner sep=0] (t) {#1};
        \draw (t.south west) -- (t.north east);
    \end{tikzpicture}%
}
\begin{remark}
    \label{rem:large-eps}
    The guarantees of \cref{thm:full-consensus} also hold for $ε ∈ [1,\log n]$, except that
    \[ 𝔼[T_i \mid E] = T_i^\OPT(1+\exp(-Ω(2^ε)) + n^{-c}) \qquad \text{\color{gray} (instead of \strike{$𝔼[T_i \mid E] = 𝒪(T_i^\OPT/ε)$}}).\]
\end{remark}
It may also be worthwhile to consider variants of \consensus that use different values of the overhead parameter $ε$ in different parts of $M$ to minimise some overall cost.
Such an idea makes sense in general if we augment the SSEP to include specific costs $c_i$ associated with testing the Bernoulli random variables belonging to the $i$th family.

\section{Analysis of Simplified \consensus}
\label{sec:simplified-consensus-analysis}

We begin by reducing the claim of \cref{thm:simplified-consensus} to a purely mathematical statement given in \cref{lem:simplified-consensus}, which we prove in \cref{sec:simplified-consensus-lemma}.

\subsection{Reduction of Theorem \ref{thm:simplified-consensus} to Lemma \ref{lem:simplified-consensus}}

Recall the tree defined in \cref{sec:algorithm} and illustrated in \cref{fig:branching-process}. For any $i ∈ [n+1]$ consider a node $v$ in layer $i$. Let $q_i$ be the probability that there is an unbroken path from $v$ to a node in layer $n+1$. (This probability is indeed the same for all nodes in layer $i$.)
In the last layer, we have $q_{n+1} = 1$ by definition. For all $1 ≤ i ≤ n$ we have $q_i = 1-(1-p_iq_{i+1})^{k_i}$ since $p_i q_{i+1}$ is the probability that an unbroken path via a specific child of $v$ exists, and since there is an independent chance for each of the $k_i$ children.
The technical challenge of this section is to show that the $q_i$ are not too small.

\begin{lemma}
    \label{lem:simplified-consensus}
    Under the conditions of \cref{thm:simplified-consensus} we have $q_i = Ω(ε)$ for all $i ∈ [n]$.
\end{lemma}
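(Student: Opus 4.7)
The plan is to exhibit a simple recurrence for $1/q_i$, unroll it down from the boundary condition $q_{n+1} = 1$, and then use the hypothesis $\prod_{j=1}^{i} p_j k_j \in [2^{\varepsilon i}, 2^{\varepsilon i+1}]$ to bound the resulting sum by a geometric series of value $O(1/\varepsilon)$.

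The first step is to replace the awkward polynomial $(1-x)^{k}$ appearing in $q_i = 1 - (1 - p_i q_{i+1})^{k_i}$ by something rational. I would use the elementary inequality $(1-x)^k \leq \frac{1}{1+kx}$, valid for $x \in [0,1]$ and $k \geq 1$, which follows by combining $1-x \leq e^{-x}$ with $e^{y} \geq 1+y$. Applying this with $x = p_i q_{i+1}$ gives
\[ q_i \;\geq\; 1 - \frac{1}{1 + k_i p_i q_{i+1}} \;=\; \frac{k_i p_i q_{i+1}}{1 + k_i p_i q_{i+1}}, \]
which rearranges into the clean recurrence $\frac{1}{q_i} \leq 1 + \frac{1}{k_i p_i\, q_{i+1}}$.

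Unrolling this recurrence from $q_{n+1} = 1$ yields
\[ \frac{1}{q_i} \;\leq\; 1 + \sum_{j=i}^{n} \prod_{\ell=i}^{j} \frac{1}{k_\ell p_\ell} \;=\; 1 + P_{i-1} \sum_{j=i}^{n} \frac{1}{P_j}, \]
where $P_j := \prod_{\ell=1}^{j} k_\ell p_\ell$. Now I invoke the growth condition: $P_j \geq 2^{j\varepsilon}$ and $P_{i-1} \leq 2\cdot 2^{(i-1)\varepsilon}$. Substituting and re-indexing turns the right-hand side into $1 + 2 \sum_{m \geq 1} 2^{-m\varepsilon}$, which by $2^{\varepsilon}-1 \geq \varepsilon \ln 2$ is at most $1 + \frac{2}{\varepsilon \ln 2}$. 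Hence $q_i \geq \frac{\varepsilon \ln 2}{\varepsilon \ln 2 + 2} = \Omega(\varepsilon)$, as required.

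The argument is essentially mechanical once the inequality $(1-x)^k \leq 1/(1+kx)$ is identified; no induction on a nontrivial hypothesis is needed because the recurrence for $1/q_i$ is affine and admits an explicit closed form. The only real obstacle is to be careful about the two sources of constant slack, namely the factor of $2$ coming from the upper end of the interval for $P_{i-1}$ and the factor $\ln 2$ coming from the geometric series, and to verify that $p_i q_{i+1} \leq 1$ so that the base inequality applies.
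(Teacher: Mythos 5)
Your proof is correct and takes a genuinely different route from the paper's. The paper lower-bounds $f_i(x) = 1-(1-p_i x)^{k_i}$ by the quadratic $p_i k_i x - (p_i k_i x)^2/2$ and then controls compositions of the $f_i$ over blocks of length $\Theta(1/\varepsilon)$ via an explicit induction (Lemma~\ref{lem:iterative-applications-of-functions}), essentially tracking that the iterates never fall below a fixed point of the typical $f_i$. You instead use the rational bound $1-(1-x)^k \geq \frac{kx}{1+kx}$, which has the decisive advantage that it turns the recurrence into an \emph{affine} one for the reciprocal $a_i = 1/q_i$, namely $a_i \leq 1 + a_{i+1}/(k_i p_i)$. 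This unrolls to a closed-form sum, and the growth hypothesis $P_j \coloneqq \prod_{\ell\leq j} k_\ell p_\ell \in [2^{\varepsilon j}, 2^{\varepsilon j + 1}]$ (with $P_0=1$) dominates it by a geometric series of ratio $2^{-\varepsilon}$, giving $a_i \leq 1 + 2/(2^\varepsilon-1) \leq 1 + 2/(\varepsilon\ln 2)$ and hence $q_i \geq \varepsilon\ln 2/(\varepsilon\ln 2+2) \geq (\ln 2/3)\varepsilon$ for $\varepsilon\leq 1$. The two sources of slack you flag (the factor $2$ from the upper endpoint of the interval, and the $\ln 2$ from $2^\varepsilon-1\geq\varepsilon\ln 2$) are handled correctly, and $p_i q_{i+1}\in[0,1]$ indeed holds because $p_i,q_{i+1}\in[0,1]$. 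What your approach buys is the elimination of the block-partitioning and the nested induction, an explicit constant, and a one-line closed form; what the paper's approach buys is a picture (the fixed-point diagram of Figure~\ref{fig:iterating-a-function}) that perhaps better conveys \emph{why} the $q_i$ stabilise. Both are valid; yours is shorter.
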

This lemma implies \cref{thm:simplified-consensus} as follows.

\begin{proof}[Proof of \cref{thm:simplified-consensus}]
    Let $T_i$ for $i ∈ [n]$ be the number of Bernoulli random variables from the family $(B_j^{(i)})_{j ∈ ℕ₀}$ that are inspected. This corresponds to the number of edges between layers $i$ and $i+1$ that are inspected by the DFS. With probability $p_i$ such an edge is unbroken and, if it is unbroken, then with probability $q_{i+1}$ the DFS will never backtrack out of the corresponding subtree and hence never inspect another edge from the same layer. This implies $T_i \sim \Geom(p_i q_{i+1})$ and thus $𝔼[T_i] = \frac{1}{p_i q_{i+1}} = 𝒪(T_i^\OPT/ε)$ where the last step uses $T_i^\OPT = 1/p_i$ and \cref{lem:simplified-consensus}.
    
    Concerning the space to encode $M = (σ₀,…,σₙ)$, there are two parts to consider. The “fixed width” numbers $(σ₁,…,σₙ) ∈ \{0,…,k₁-1\} × … × \{0,…,kₙ-1\}$ can be encoded using $⌈\log₂ \prod_{i ∈ [n]} k_i⌉$ bits.
    By the assumption on $(k_i)_{i ∈ [n]}$ we have $\prod_{i ∈ [n]} k_i ≤ 2·\prod_{i ∈ [n]} 2^{ε}/p_i$ and can bound the space by
    \[
          1+ \log₂ \prod_{i ∈ [n]} k_i
        ≤ 1+ \log₂ \Big(2 \prod_{i ∈ [n]} \frac{2^ε}{p_i}\Big)
        ≤ 2 + εn + \sum_{i ∈ [n]} \log₂ \frac{1}{p_i} = M^\OPT + εn + 𝒪(1).
    \]
    The “variable width” number $1+σ₀ ∈ ℕ$ has distribution $1+σ₀ \sim \Geom(q₁)$ because $σ₀$ is the $0$-based index of the first node in layer $1$ with an unbroken path to layer $n+1$. Again by \cref{lem:simplified-consensus} we have $𝔼[1+σ₀] = \frac{1}{q₁} = 𝒪(1/ε)$. Using Jensen's inequality \cite{jensen1906fonctions,mitzenmacher2017probability} and the fact that $\log_2$ is concave, the expected number of bits to encode $σ₀$ is hence bounded by
    \[ 𝔼[⌈\log₂(1+σ₀)⌉] ≤ 1 + \log₂ 𝔼[1+σ₀] = 1 + \log₂(𝒪(1/ε)) = \log₂ 1/ε + 𝒪(1). \]
    Summing the contributions of both parts yields a bound on $𝔼[|M|]$ as claimed.
\end{proof}

\subsection{Proof of Lemma \ref{lem:simplified-consensus}}

To understand the proof of \cref{lem:simplified-consensus} it may help to consider a simplified case first.
Assume $p₁ = … = pₙ = p$ and $k₁ = … = kₙ = k$ with $kp = 2^ε$, i.e.\ $k$ is slightly larger than $\frac{1}{p}$. We have $q_i = 1-(1-pq_{i+1})^k$, which motivates defining $f(x) = 1-(1-px)^k$ such that $q_i$ arises by iteratively applying $f$ to an initial value of $q_{n+1} = 1$.
From the illustration in \cref{fig:iterating-a-function} we see that $q_i$ can never drop below the largest fixed point $x^*$ of $f$, and it is not hard to show that $x^* = Ω(ε)$.

\begin{figure}[tb]
    \centering
    \includegraphics[scale=0.9]{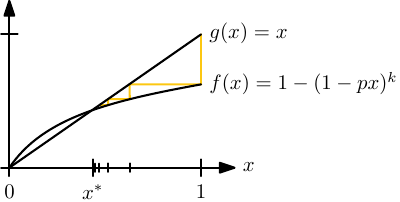}
    \caption{For $p₁ = … = pₙ = \frac{1}{4}$ and $k₁ = … = kₙ = 5$ (thus $ε = \log₂(5/4) ≈ 0.32$) the values $q_{n+1},q_{n},q_{n-1},…$ (marked as ticks on the $x$-axis) arise by iteratively applying $f(x) = 1-(1-px)^k$ to the starting value of $1$. Each $q_i$ is at least the fixed point $x^* ≈ 0.45$ of $f$.}
    \label{fig:iterating-a-function}
\end{figure}

Our full proof has to deal with a setting that is much less clean. The values $p_i$ and $k_i$ depend on $i$, yielding a distinct function $f_i$ for each $i ∈ [n]$ and $p_i k_i ≈ 2^ε$ is only true in the sense of a geometric average over $Ω(1/ε)$ consecutive indices. We hence analyse compositions $f_{a}∘…∘f_{b-1}$ for sufficiently large $b-a = Θ(1/ε)$.
\label{sec:simplified-consensus-lemma}%
We begin with some definitions.

\begin{itemize}
    • Let $f_i(x) = 1-(1-p_i x)^{k_i}$. Note that this gives $q_i = (f_i ∘ … ∘ fₙ)(1)$ for $i ∈ [n+1]$.
    • For any set $I = \{a,…,b-1\}$ let
    $P_I = \prod_{i ∈ I} p_i$, $K_I = \prod_{i ∈ I} k_i$, and $F_I = f_a ∘ … ∘ f_{b-1}$.
    Note that the assumption of \cref{thm:simplified-consensus} gives
    \[
        P_I K_I ·2^{-ε|I|} =
        \underbrace{P_{\{1,…,b-1\}} K_{\{1,…,b-1\}} 2^{-ε(b-1)}}_{∈ [1,2]}
        \Big( \underbrace{P_{\{1,…,a-1\}} K_{\{1,…,a-1\}} 2^{-ε(a-1)}}_{∈ [1,2]} \Big)^{-1} ∈ [\tfrac 12,2].
    \]
    • Let $ε₀ \coloneq ⌈2ε^{-1}⌉^{-1}$. Its role is to be a number satisfying
    $ε₀ = Θ(ε)$, $ε₀^{-1} ∈ ℕ$, and $(2^{ε})^{ε₀^{-1}} ∈ [4,8]$.
    In particular, if $I = \{a,…,b-1\}$ with $b-a = ε₀^{-1}$ then
    \[
        P_I K_I = \underbrace{P_I K_I 2^{-ε|I|}}_{∈ [\tfrac 12,2] } · \underbrace{2^{ε|I|}}_{∈ [4,8]} ∈ [2,16].
    \]
    • Lastly, let $C = 8$ and $ε₁ = ε₀/C$.
\end{itemize}

These definitions are designed to make the proof of the following lemma as simple as possible. No attempt was made to optimise constant factors.

\begin{lemma}
    \label{lem:iterative-applications-of-functions}
    Let $I = \{a,…,b-1\} ⊆ [n]$. The following holds.
    \begin{enumerate}[(i)]
        • For all $i ∈ [n]$ and $x ∈ [0,1]$: $f_i(x) ≥ p_i k_i x - (p_i k_i x)²/2$.
        • If $b-a ≤ ε₀^{-1}$ then $F_I(x) ≥ P_I K_I x (1-C|I|x)$ for $x ∈ [0,ε₁]$.
        • If $b-a = ε₀^{-1}$ then $F_I(ε₁/2) ≥ ε₁/2$.
    \end{enumerate}
\end{lemma}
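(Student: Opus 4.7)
My plan is to handle the three parts in order: (i) provides a workhorse polynomial inequality, (ii) follows by induction on $|I|$ combined with a companion upper bound, and (iii) is a near-immediate substitution into (ii).

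For part (i), I would set $y := p_i x \in [0,1]$ (valid since $p_i, x \in [0,1]$). The claim then reduces to the polynomial inequality $(1-y)^k \leq 1 - ky + (ky)^2/2$ on $[0,1]$ for $k = k_i \geq 1$. I would prove this by considering $g(y) := 1 - ky + (ky)^2/2 - (1-y)^k$ and verifying $g(0) = g'(0) = 0$ together with $g''(y) = k^2 - k(k-1)(1-y)^{k-2} \geq 0$ on $[0,1]$ (immediate since $(1-y)^{k-2} \leq 1$; the case $k = 1$ reduces to $g(y) = y^2/2 \geq 0$).

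For part (ii), I would induct on $|I|$, with trivial base case $|I| = 0$ where $F_\emptyset = \mathrm{id}$. Writing $I = \{a\} \cup I'$ and applying (i) to the outermost function gives
\[ F_I(x) = f_a(F_{I'}(x)) \geq p_a k_a F_{I'}(x) - \tfrac12 (p_a k_a F_{I'}(x))^2. \]
I would then lower-bound the positive term via the inductive hypothesis on $F_{I'}$ and upper-bound $F_{I'}(x)$ in the subtracted term via a companion upper bound $F_{I'}(x) \leq P_{I'} K_{I'} x$ — itself a parallel induction from the ordinary Bernoulli inequality $1 - (1 - p_i y)^{k_i} \leq p_i k_i y$. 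Pulling $P_I K_I x$ out, the remaining task is the scalar inequality $P_I K_I \leq 2C = 16$; this is guaranteed by the hypothesis $|I| \leq ε_0^{-1}$ since the preamble of the lemma gives $P_I K_I \leq 2 \cdot 2^{ε|I|} \leq 2 \cdot 2^{ε \cdot ε_0^{-1}} \leq 2 \cdot 8$ by the construction of $ε_0$.

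For part (iii), I would substitute $x = ε_1/2$ and $|I| = ε_0^{-1}$ into (ii). The calibration $ε_1 = ε_0/C$ makes $C|I| \cdot (ε_1/2) = 1/2$, so the second factor collapses to $1/2$ and we obtain $F_I(ε_1/2) \geq P_I K_I \cdot ε_1/4$; the preamble's matching lower bound $P_I K_I \geq 2$ (valid precisely because $|I|$ attains the full block length $ε_0^{-1}$) then yields $F_I(ε_1/2) \geq ε_1/2$. The main obstacle I expect is managing the two-sided control on $F_{I'}(x)$ in the inductive step of (ii): every factor entering the multiplied inequalities must remain non-negative for the composition to preserve direction, which the restriction $x \leq ε_1$ and the calibration $C = 8$ only barely ensure, leaving essentially no slack in the constants.
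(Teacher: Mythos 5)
Your proof is correct, and the skeleton (part (i) as a workhorse bound, part (ii) by induction on $|I|$, part (iii) by direct substitution with the calibration $C\varepsilon_0^{-1}\varepsilon_1 = 1$) matches the paper's. Two internal details differ and are worth naming. For (i), you establish $(1-y)^k \leq 1 - ky + (ky)^2/2$ directly by checking $g(0)=g'(0)=0$ and $g''\ge 0$, whereas the paper passes through $1 - x \leq e^{-x} \leq 1 - x + x^2/2$ applied to $1 - (1-p_ix)^{k_i} \ge 1 - e^{-p_ik_ix}$; both are elementary, and yours avoids introducing the exponential. For the inductive step of (ii), the paper first uses monotonicity of $f_a$ to replace $F_{I'}(x)$ by the inductive lower bound and \emph{then} applies (i), disposing of the extra factor via $(1-C|I'|x)^2 \leq 1$; you instead apply (i) first to $F_{I'}(x)$ and then control the two resulting terms with \emph{two} separate bounds on $F_{I'}(x)$ — the inductive lower bound for the linear term and a companion upper bound $F_{I'}(x) \leq P_{I'}K_{I'}x$ (itself a short Bernoulli-inequality induction) for the quadratic term. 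Both routes land on the same intermediate expression $P_IK_Ix(1-C|I'|x) - (P_IK_Ix)^2/2$ and then use the same scalar fact $P_IK_I \leq 2C = 16$ from the preamble. Your route costs you the extra companion induction; the paper's route costs it one extra monotonicity step — a wash. One thing to be explicit about when writing this up: applying (i) to $f_a$ at the point $F_{I'}(x)$ requires $F_{I'}(x) \in [0,1]$, which holds because each $f_i$ maps $[0,1]$ into $[0,1]$; and the inductive lower bound on $F_{I'}(x)$ is nonnegative on $[0,\varepsilon_1]$ because $C|I'|x \leq C(\varepsilon_0^{-1}-1)\varepsilon_1 < 1$, so splitting the linear and quadratic terms and bounding them in opposite directions is legitimate. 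You flagged this yourself as the main obstacle; it does go through.
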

\begin{proof}\ 
    \begin{enumerate}[(i)]
        • $e^{-x}$ satisfies the following bounds: $1-x ≤ e^{-x} ≤ 1 - x + x²/2$ for $x > 0$.
        We get the left claim by integrating $e^{-x} ≤ 1$ for $t ∈ [0,x]$ to get $1-e^{-x} ≤ x$. Integrating again for $t ∈ [0,x]$ gives the right claim.
        Applying to $f_i(x)$ we get
        \[ f_i(x) = 1-(1-p_i x)^{k_i} ≥ 1 - e^{-p_i k_i x} ≥ 1 - (1-p_i k_i x + (p_i k_i x)²/2) = p_i k_i x - (p_i k_i x)²/2.\]
        • We use induction on $b-a$. For $b - a= 0$ we have $I = ∅$, $F_I = \mathrm{id}$ as well as $P_I = K_I = 1$ and the claim holds. Let now $I = \{a, …, b-1\}$ with $b-a ≤ ε₀^{-1}$and $I' = \{a+1,…,b-1\}$. We assume the claim holds for $I'$. Let $x ∈ [0,ε₁]$.
        \begin{align*}
            F_I(x)
            &= (f_a ∘ F_{I'})(x)
            = \overset{\mathclap{\underset{↓}{\smash{\text{monotonic}}}}}{f_a}(F_{I'}(x))
            \textrel{induction}{≥} f_a(P_{I'} K_{I'} x (1-C|I'|x))
            \\
            &\textrel{(i)}{≥} p_a k_a P_{I'} K_{I'} x (1-C|I'|x)
            - (p_a k_a P_{I'} K_{I'} x (1-C|I'|x))²/2\\
            &= P_{I} K_{I} x (1-C|I'|x)
            - (P_{I} K_{I} x (1-\underbrace{C|I'|x}_{\mathclap{≤ Cε₀^{-1}ε₁ = 1 < 2
            \rlap{\ and hence $(1-C|I'|x)² ≤ 1$}
            }}))²/2\\
            &≥ P_{I} K_{I} x (1-C|I'|x)
            - (P_{I} K_{I} x)²/2
            = P_{I} K_{I} x (1-x(C|I'|+\underbrace{P_I K_I}_{\mathclap{≤ 16 = 2C}} / 2))\\[-7pt]
            &≥ P_{I} K_{I} x (1-x(C|I'|+C))
            = P_{I} K_{I} x (1-C|I|x).
        \end{align*}
        • By applying (ii) and using $Cε₀^{-1}ε₁ = 1$ we get
        \[ F_I(ε₁/2) \textrel{\smash{(ii)}}{≥} \underbrace{P_I K_I}_{≥ 2} \tfrac{ε₁}{2} \underbrace{(1-Cε₀^{-1} \tfrac{ε₁}{2})}_{= \frac 12} ≥ ε₁/2.\qedhere\]
    \end{enumerate}
\end{proof}
The proof of \cref{lem:simplified-consensus} is now immediate:
\begin{proof}[Proof of \cref{lem:simplified-consensus}]
    Let $i ∈ [n]$. Partition $I = \{i,…,n\}$ into contiguous intervals $I₁,…,I_k$ such that $F_I = F_{I₁}∘ … ∘ F_{I_k}$ with $|I₁| ≤ ε₀^{-1}$ and $|I_j| = ε₀^{-1}$ for $j ∈ \{2,…,k\}$.
    Then by repeated applications of \cref{lem:iterative-applications-of-functions} (iii), one application of \cref{lem:iterative-applications-of-functions} (ii) and the monotonicity of the relevant functions we get
    \begin{align*}
        q_i &= F_I(1) ≥ F_I(ε₁/2) = F_{I₁}(F_{I₂}(…(F_{I_k}(ε₁/2))…)) \textrel{(iii)}{≥} F_{I₁}(ε₁/2)\\
        &
        \textrel{(ii)}{≥} P_{I₁} K_{I₁} \frac{ε₁}{2} (1-C|I₁|\frac{ε₁}{2})
        = \underbrace{P_{I₁} K_{I₁} 2^{-ε|I₁|}}_{≥ \frac 12} \underbrace{2^{ε|I₁|}}_{≥1} \frac{ε₁}{2} (1-\underbrace{C|I₁|\frac{ε₁}{2}}_{≤ \frac 12})
        ≥ \frac{ε₁}{8}.\qedhere
    \end{align*}
\end{proof}

\section{Analysis of Full \consensus}
\label{sec:full-consensus-analysis}

\def\prefix{\mathrm{prefix}}
\def\suffix{\mathrm{suffix}}

Full \consensus performs a DFS in a tree as depicted in \cref{fig:branching-process} in much the same way as simplified \consensus, with two differences. First, the root node only has $2^w$ children (not infinitely many) and the probability that an unbroken path exists is therefore strictly less than $1$. Secondly, when examining the outgoing edge of index $σ_i$ of some node in layer $i$, then its status is no longer linked to $B_{Σ}^i$ for $Σ = σ₀∘…∘σ_i$. Instead $Σ$ is divided as $Σ = \prefix(Σ) ∘ \suffix(Σ)$ where $\suffix(Σ) ∈ \{0,1\}^w$ and $B_{\suffix(Σ)}^i$ is used.
To prepare for the proof of \cref{thm:full-consensus}, we present a coupling to, and a claim about simplified \consensus.

\subparagraph{A Coupling between Simplified and Full \consensus.}
Consider the following natural coupling. Let $B_j^{(i)}$ for $i ∈ [n]$ and $j ∈ ℕ₀$ be the random variables underlying simplified \consensus and $\tilde{B}_{S}^{(i)}$ for $i ∈ [n]$ and $S ∈ \{0,1\}^w$ the random variables underlying full \consensus. The coupling should ensure that $B^{(i)}_{Σ} = \tilde{B}^{(i)}_{\suffix(Σ)}$ for all pairs $(i,Σ)$ such that simplified \consensus inspects $B^{(i)}_{Σ}$ and does not at an earlier time inspect $B^{(i)}_{Σ'}$ with $\suffix(Σ') = \suffix(Σ)$. In other words, for any pair $(i,S)$, if simplified consensus inspects at least one variable $B^{(i)}_{Σ}$ with $\suffix(Σ) = S$ then $\tilde{B}^{(i)}_S$ equals the earliest such variable that is inspected.

This implies that the runs of simplified and full \consensus behave identically as long as, firstly, $σ₀$ never reaches $2^w$ and, secondly, simplified \consensus never \emph{repeats a suffix}, by which we mean inspecting $B^{(i)}_Σ$ for some pair $(i,Σ)$ such that $B^{(i)}_{Σ'}$ with $\suffix(Σ) = \suffix(Σ')$ was previously inspected. In the following, by a \emph{step} of \consensus we mean one iteration of its main while-loop, which involves inspecting exactly one random variable.

\begin{claim}
    \label{claim:repeating-suffix-prob}
    Simplified \consensus repeats a suffix in step $t$ with probability at most $\frac{t}{2^w}$.
\end{claim}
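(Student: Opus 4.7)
My plan is to use a union bound over pairs of steps. For $1 \le s < t$, let $E_{s,t}$ be the event that steps $s$ and $t$ witness a suffix collision, i.e., $i_s = i_t$, $\Sigma_s \neq \Sigma_t$, but $\suffix(\Sigma_s) = \suffix(\Sigma_t)$. The event ``step $t$ repeats a suffix'' is exactly $\bigcup_{s < t} E_{s,t}$, so it suffices to show $\Pr[E_{s,t}] \le 1/2^w$ for each fixed $s < t$; summing then gives the claimed bound $(t-1)/2^w \le t/2^w$.

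For the per-pair bound, I would decompose each inspected node as $\Sigma_u = \sigma_0^{(u)} \cdot 2^{L_{i_u}} + \nu_u$, where $L_i := \sum_{j \le i}\ell_j$ and $\nu_u < 2^{L_{i_u}}$ is the binary number formed by concatenating $\sigma_1^{(u)}, \ldots, \sigma_{i_u}^{(u)}$. Conditioning on $i_s = i_t = i$, the congruence $\Sigma_s \equiv \Sigma_t \pmod{2^w}$ becomes
\[
(\sigma_0^{(s)} - \sigma_0^{(t)}) \cdot 2^{L_i} \equiv \nu_t - \nu_s \pmod{2^w}.
\]
A case split on whether $L_i \ge w$ (where the $\sigma_0$'s drop out and the constraint reduces to $\nu_t \equiv \nu_s \pmod{2^w}$ with $\nu_t \neq \nu_s$) or $L_i < w$ (where $\nu_s = \nu_t$ is forced and $\sigma_0^{(s)} - \sigma_0^{(t)}$ must lie in a fixed residue class modulo $2^{w-L_i}$) turns the collision condition into a concrete arithmetic constraint.

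I would then apply the principle of deferred decisions: condition on all Bernoulli outcomes inspected in steps $1,\ldots,s-1$ (which fixes $(i_s,\Sigma_s)$) and on the sequence of layers and partial path descents taken in steps $s+1,\ldots,t-1$. Under this conditioning, the remaining freedom in $\sigma_0^{(t)}$ and in the path $(\sigma_1^{(t)},\ldots,\sigma_i^{(t)})$ is driven by Bernoulli outcomes at pairs disjoint from $(i_s,\Sigma_s)$; a counting argument should show that over this fresh randomness, the probability of landing in the residue class dictated by the collision constraint is at most $1/2^w$.

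The main obstacle I expect is making this equidistribution step precise, particularly in the case $L_i < w$, because the congruence there couples the top-level counter $\sigma_0$ with the low-bit path. One clean route is to couple the real algorithm with an idealised variant in which each root-level increment of $\sigma_0$ is replaced by a uniform resample of $\sigma_0 \bmod 2^{w-L_i}$; in that idealised variant $\sigma_0^{(t)}$ hits every residue modulo $2^{w-L_i}$ with equal probability and the $1/2^w$ bound is immediate, while the discrepancy with the actual algorithm must be shown to fit within the slack between $(t-1)/2^w$ and the claimed $t/2^w$.
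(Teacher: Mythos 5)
Your approach is genuinely different from the paper's. You propose a union bound over pairs of steps $(s,t)$, reducing everything to a per-pair bound $\Pr[E_{s,t}] \le 1/2^w$. The paper instead conditions on the pair $(i,\Sigma^*)$ being inspected at step $t$, observes that a repeated suffix requires $\suffix(\Sigma^*)$ to coincide with the suffix of some previously inspected $\Sigma$ in layer $i$ (which forces $\prefix(\Sigma) < \prefix(\Sigma^*)$), and then argues by symmetry among the nodes of layer $i$: the DFS has exhaustively explored all prefixes below $\prefix(\Sigma^*)$ and the order of that exploration no longer matters, so each of the $2^w$ possible suffixes is ``blocked'' with the same probability; since at most $t$ of them can be blocked, each is blocked with probability at most $t/2^w$. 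This global argument never needs to reason about the joint distribution of $(\Sigma_s,\Sigma_t)$ for two specific steps, which is precisely where your plan runs into trouble.

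There is a genuine gap in your deferred-decisions step, and it is more serious than you indicate. Conditioning on ``the sequence of layers and partial path descents taken in steps $s+1,\ldots,t-1$'' over-conditions: in a DFS the layer sequence already encodes the intermediate Bernoulli outcomes (the layer increases exactly when the inspected variable is $1$), so this conditioning pins down $(i_t,\Sigma_t)$ entirely and leaves no fresh randomness on which to base an equidistribution argument. The deeper obstacle is the one you flag for $L_i < w$: $\sigma_0$ is a deterministic counter incremented at each full backtrack, not fresh randomness, and there is no natural conditioning under which $\sigma_0^{(t)} - \sigma_0^{(s)}$ is uniform modulo $2^{w-L_i}$. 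The coupling you sketch, replacing root-level increments by uniform resamples of $\sigma_0 \bmod 2^{w-L_i}$, cannot rescue this: the discrepancy between the coupled and real processes is not a lower-order error that fits in the slack between $(t-1)/2^w$ and $t/2^w$ (that slack is a single $1/2^w$); it is the entire nontrivial content of what you are trying to prove. As written the proposal does not close the gap, and I would recommend adopting the paper's layer-symmetry argument rather than trying to make the pairwise bound precise.
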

\begin{proof}
    Assume simplified \consensus is about to inspect $B^{(i)}_{Σ^*}$. Call $S ∈ \{0,1\}^w$ a \emph{blocked suffix} if there exists $Σ ∈ \{0,1\}^*$ such that $S = \suffix(Σ)$ and $B^{(i)}_{Σ}$ was previously inspected. This necessitates $\prefix(Σ) < \prefix(Σ^*)$. Clearly, step $t$ repeats a suffix if $\suffix(Σ^*)$ is blocked.
    A crucial observation is that the DFS has exhaustively explored all branches of the tree associated with any prefix less than $Σ^*$ and the order in which this exploration has been carried out is no longer relevant. The symmetry between all nodes of the same layer therefore implies that every suffix is blocked with the same probability. Since at most $t$ suffixes can be blocked at step $t$, the probability that $Σ^*$ is blocked (given our knowledge of $(i,Σ^*)$) is at most $\frac{t}{2^w}$.
\end{proof}

\begin{proof}[Proof of \cref{thm:full-consensus}.]
    \def\err{\mathrm{err}}
    \def\limit{\mathrm{limit}}
    First note that the choice of $(ℓ₁,…,ℓₙ)$ made in \cref{thm:full-consensus} amounts to a valid choice of $(k₁ = 2^{ℓ₁},…,kₙ = 2^{ℓₙ})$ for applying \cref{thm:simplified-consensus}. Indeed, for each $i ∈ [n]$ there exists some $\err_i ∈ [1,2)$ related to rounding up such that
    \begin{align*}
        \prod_{j = 1}^i p_j k_j 2^{-ε}
        &= \Big( \prod_{j = 1}^i p_j \Big) 2^{\sum_{j = 1}^{i} ℓ_j} 2^{-εi}
        = \Big( \prod_{j = 1}^i p_j \Big) 2^{⌈εi + \sum_{j = 1}^i \log₂(1/p_i)⌉} 2^{-εi}\\
        &= \Big( \prod_{j = 1}^i p_j \Big) 2^{εi + \sum_{j = 1}^i \log₂(1/p_i)} ·\err_i 2^{-εi}
        = \err_i ∈ [1,2]
    \end{align*}
    as required. The number $T = \sum_{i = 1}^{n} T_i$ of steps taken by simplified \consensus satisfies $𝔼[T] = 𝒪(\sum_{i = 1}^n \frac{1}{p_i ε}) = 𝒪(n^{1+2c})$. By Markov's inequality $T$ exceeds its expectation by a factor of $n^c$ with probability at most $n^{-c}$ so with probability at least $1-𝒪(n^{-c})$ simplified \consensus terminates within $T_{\limit} = n^{1+3c}$ steps.
    
    By \cref{claim:repeating-suffix-prob} and a union bound the probability that simplified \consensus repeats a suffix within its first $T_{\limit}$ steps is at most $\sum_{t = 1}^{T_{\limit}} t/2^w = 𝒪(T_{\limit}²/2^w) = 𝒪(n^{2+6c}/2^w)$. By choosing $w ≥ (2+7c)\log₂ n$ this probability is $𝒪(n^{-c})$. This choice also guarantees that $σ₀$ never exceeds $2^w$ during the first $T_{\limit}$ steps. Let now $E = \{T ≤ T_{\limit}\} ∩ \{\text{no repeated suffix in the first $T_{\limit}$ steps}\}$. By a union bound $\Pr[E] = 1-𝒪(n^{-c})$ and by our coupling $E$ implies that simplified and full \consensus explore the same sequence of nodes in their DFS and terminate with the same sequence $σ₀,…,σₙ$ with $σ₀ < 2^w$. In particular full \consensus succeeds in that case as claimed.
    
    The updated bound on $|M|$ simply reflects that $σ₀$, which in simplified \consensus had expected length $\log₂ 1/ε + 𝒪(1)$ now has fixed length $w = 𝒪(\log n)$. Lastly, if $T_i$ and $\tilde{T}_i$ denote the number of Bernoulli trials from the $i$-th family inspected by simplified and full \consensus, respectively, then conditioned on $E$ we have $T_i = \tilde{T}_i$ so
    \[ 𝔼[\tilde{T}_i \mid E] = 𝔼[T_i | E] ≤ 𝔼[T_i]/\Pr[E] = 𝒪(𝔼[T_i]) = 𝒪(T_i^\OPT / ε).\qedhere\]
\end{proof}

\section{Application to Minimal Perfect Hashing}\label{s:mphf}
In this section, we apply the \consensus idea to the area of minimal perfect hash functions (MPHFs).
Recall from \cref{sec:mphf-results} that an MPHF for a given set $X$ of $n$ keys maps each key in $X$ to a unique integer from $[n]$.

\subsection{Existing Strategies for Constructing MPHFs}\label{s:relatedMphf}

We begin with a brief overview of established approaches with a focus on those we need. Detailed explanations are found in \cite{lehmann2024fast,lehmann2025modern}.

\subparagraph{Partitioning.}
Most approaches randomly partition the input into small \emph{buckets} using a hash function and then construct an MPHF on each bucket separately.
The natural way to obtain an MPHF on the entire input involves storing the bucket-MPHFs and prefix sums for the bucket sizes.
Storing the prefix sums can be avoided when the partitioning uses a $k$-perfect hash function, as is done in ShockHash-Flat \cite{lehmann2024shockhash2}, and as we do in \cref{sec:bucketed_consensus_recsplit}.

Hagerup and Tholey \cite{hagerup2001efficient} demonstrate that partitioning into tiny buckets of size $𝒪(\frac{\log n}{\log \log n})$ when suitably combined with exhaustive tabulation of MPHFs on tiny inputs yields an approach with construction time $𝒪(n)$, query time $𝒪(1)$ and space overhead $ε = \Theta(\frac{\log² \log n}{\log n})$. Unfortunately the analysis requires astronomical values of $n ≥ \max(\exp(ω(1/ε)),2^{150})$ \cite{botelho2013practical} and is restricted to these rather large $ε$.

Below we explain \emph{monolithic} variants of MPHFs, i.e.\ constructions not using partitioning, which are typically combined with partitioning to obtain faster \emph{bucketed} variants.

\subparagraph{Recursive Splitting.}
A recursive splitting strategy \cite{esposito2020recsplit} involves a predefined \emph{splitting tree}, which is a rooted tree with $n$ leaves.\footnote{In the original paper \cite{esposito2020recsplit}, the lowest level is suppressed, with correspondingly changed terminology.}
See \cref{fig:splitting-tree} (left) for an example with $n = 11$.
An MPHF for an input set of size $n$ is given by a family of functions containing for each internal node $v$ of the tree a \emph{splitting hash function} $f_v$ that maps keys to the children of $v$.
Together, the family $(f_v)_v$ must describe a recursive partitioning of the input set into parts of size~$1$.
The MPHF is queried for a key $x$ by starting at the root.
The splitting hash functions are evaluated on $x$ and the resulting path is followed until a leaf is reached, the index of which is returned.
Each $f_v$ is identified by a seed that is found using trial and error.
In the original paper \cite{esposito2020recsplit} seeds are stored using Golomb-Rice coding \cite{golomb1966run,rice1979some}.
To keep the space overhead low, the last layer of the splitting tree uses $\ell$-way splits for relatively large $\ell$ (5--16 in practice), which makes splitting hash functions costly to find.

\subparagraph{Multiple-Choice Hashing.}
We randomly assign a set $\{h₁(x),…,h_{k}(x)\} ⊆ [n(1+ε)]$ of candidates to each key hoping that a choice $σ(x) ∈ \{1,…,k\}$ can be made for each key such that $x ↦ h_{σ(x)}(x)$ is injective. A perfect hash function is then given by a retrieval data structure that stores $σ$ \cite{BPZ:Practical:2013,Vigna:Fast-Scalable-Construction-of-Functions:2016,LSW:SicHash:2023}.
A recent variant ShockHash \cite{lehmann2023shockhash,lehmann2024shockhash2} achieves record space efficiency by using the aggressive configuration $ε = 0$ and $k = 2$ (combined with partitioning and recursive splitting). Many retries are needed until $σ$ exists.

\subparagraph{Bucket Placement.}
A general strategy with many variants assigns a random bucket $f(x) ∈ \{1,…,b\}$ to each key and considers for each bucket $i ∈ [b]$ several (hash) functions $g_{i,1},g_{i,2},…$ with range $[n]$. A choice $σ(i)$ is stored for each bucket such that $x ↦ g_{f(x),σ(f(x))}(x)$ is a bijection. Typically, the values $σ(i)$ are chosen greedily in decreasing order of bucket sizes \cite{pibiri2021pthash,hermann2024phobic,belazzougui2009hash}, but backtracking has also been considered \cite{yang1985backtracking,fox1992practical} (in a different sense than \consensus).

\begin{figure}[t]
    \includegraphics{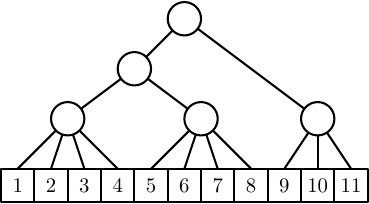}
    \includegraphics{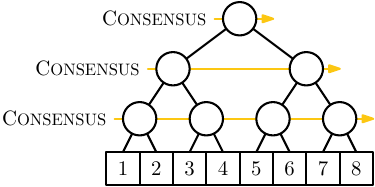}
    \caption{A general splitting tree and a balanced binary splitting tree as used in \cref{sec:monolithic-recsplit}.}
    \label{fig:splitting-tree}
\end{figure}

\subsection{Monolithic \consensus-RecSplit}
\label{sec:monolithic-recsplit}
We now describe a variant of recursive splitting in an idealised setting with $n = 2^d$ that uses \consensus to be particularly space efficient.
The idea is simple. The data structure consists of $n-1$ two-way splitting hash functions, i.e.\ hash functions with range $\{0,1\}$, arranged in a balanced binary tree with $n$ leaves, see \cref{fig:splitting-tree} (right).

This means that during construction for a key set $S$ we have to select the splitting hash functions from top to bottom such that each function splits the part of $S$ arriving at its node perfectly in half.
The data structure is entirely given by a sequence of $n-1$ seeds that identify the splitting hash functions.
If we search and encode these seeds using the \consensus algorithm, we obtain the following result.

\begin{theorem}[Monolithic \consensus-RecSplit]
    \label{thm:improved-recsplit-monolithic}
    Let $n = 2^d$ for $d ∈ ℕ$ and $ε ∈ [\frac{1}{n},1]$. Monolithic-RecSplit is an MPHF data structure with space requirement $\OPTMPHF+𝒪(εn+\log² n)$, expected construction time $𝒪(\max(n^{3/2},\frac{n}{ε}))$ and query time $𝒪(\log n)$.
\end{theorem}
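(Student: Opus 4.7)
I would cast monolithic \consensus-RecSplit as a direct instance of the SSEP. Order the $n-1$ internal nodes $v_1,\ldots,v_{n-1}$ of the balanced binary splitting tree (e.g.\ in level order), and let $X_{v_j}$ denote the key set reaching $v_j$ during construction. Associate to $v_j$ the Bernoulli process whose $s$-th trial asks whether the hash $h_s:U\to\{0,1\}$ splits $X_{v_j}$ exactly in half; this succeeds independently with probability $p_{v_j}=\binom{|X_{v_j}|}{|X_{v_j}|/2}/2^{|X_{v_j}|}$, matching the SSEP model. Running full \consensus (\cref{thm:full-consensus}) on this sequence produces a bitstring $M$ which, together with a tiny header, fully describes the data structure.

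\textbf{Space analysis.} The key computation is $M^\OPT=\sum_{j}\log_2(1/p_{v_j})$. I would use the combinatorial identity
\[\prod_{v\text{ internal}}\binom{|X_v|}{|X_v|/2}=n!,\]
which follows because a simultaneous choice of a perfect halving at every internal node pins down exactly one of the $n!$ orderings of the leaves. Combined with $\sum_v|X_v|=nd=n\log_2 n$, this gives $M^\OPT=n\log_2 n-\log_2 n!$, which by Stirling equals $n\log_2 e-\tfrac{1}{2}\log_2(2\pi n)+\mathcal{O}(1/n)=\OPTMPHF+\mathcal{O}(\log n)$. Plugging this into \cref{thm:full-consensus} yields seed storage $|M|=M^\OPT+\varepsilon(n-1)+\mathcal{O}(\log n)=\OPTMPHF+\mathcal{O}(\varepsilon n+\log n)$. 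To locate the seed fragment of an arbitrary $v$ in $\mathcal{O}(1)$ time, I would precompute, for each of the $d=\log_2 n$ depths, both the fragment length $\ell_i$ and the cumulative bit-offset; this lookup table costs an additional $\mathcal{O}(d\log n)=\mathcal{O}(\log^2 n)$ bits, reproducing the claimed total of $\OPTMPHF+\mathcal{O}(\varepsilon n+\log^2 n)$.

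\textbf{Time, query, and the main obstacle.} By \cref{thm:full-consensus}, $\mathbb{E}[T_v\mid E]=\mathcal{O}(1/(p_v\varepsilon))$ and each trial at $v$ costs $\mathcal{O}(|X_v|)$ for the hash evaluation. Using $1/p_v=\Theta(\sqrt{|X_v|})$ (Stirling), the work contribution of depth $i$ is $2^i\cdot(n/2^i)^{3/2}=n^{3/2}/2^{i/2}$, and the geometric sum over $i=0,\ldots,d-1$ is $\mathcal{O}(n^{3/2})$, giving the naive bound $\mathcal{O}(n^{3/2}/\varepsilon)$. Query time is immediately $\mathcal{O}(\log n)$: descend $d$ levels, reading one fragment and evaluating one hash at each level in $\mathcal{O}(1)$ via the offset table. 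The main obstacle is tightening the construction bound from $\mathcal{O}(n^{3/2}/\varepsilon)$ to the claimed $\mathcal{O}(\max(n^{3/2},n/\varepsilon))$. As suggested by the remark at the end of \cref{sec:algorithm}, I would use non-uniform overhead parameters $\varepsilon_i$ per SSEP position, choosing $\varepsilon_i$ large enough for \cref{rem:large-eps} to apply (removing the $1/\varepsilon$ blow-up) at the few top levels of the splitting tree where the $\Theta(n^{3/2})$ work is concentrated, and small on the deeper, more numerous levels so that the $\mathcal{O}(\varepsilon n)$ space budget is respected. Verifying that this rebalancing simultaneously satisfies both budgets is the most delicate calculation in the proof.
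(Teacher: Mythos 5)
Your combinatorial identity $\prod_{v}\binom{|X_v|}{|X_v|/2}=n!$ is a cleaner way to compute $M^\OPT$ than the paper's inductive verification of $\prod_{\ell} p(2^{d-\ell})^{2^\ell}=n!/n^n$; both give $M^\OPT=\log_2(n^n/n!)=\OPTMPHF+\mathcal{O}(\log n)$, so your space analysis is sound and arguably more insightful. You also correctly diagnose the central obstacle --- that a single uniform $\varepsilon$ only gives $\mathcal{O}(n^{3/2}/\varepsilon)$ --- and correctly point toward level-dependent overhead parameters as the fix.

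However, two things keep this from being a complete proof. First, casting the whole tree as one SSEP instance is not quite legitimate: the trials at node $v$ depend on which keys arrive at $v$, which depends on the seeds chosen at ancestors of $v$, so the Bernoulli processes for different nodes are \emph{not} independent. You would need the SSSEP framework of \cref{sec:SSSEP} (or, as the paper does, a separate \consensus instance per level constructed top-to-bottom, which makes the within-level processes genuinely independent and lets you apply \cref{thm:full-consensus} as stated). Second, and more substantively, \cref{thm:full-consensus} is stated for a single scalar $\varepsilon$; feeding it non-uniform $\varepsilon_i$ per position would require re-proving a generalised version. The paper avoids this entirely by running one \consensus per level $\ell$, each with its own scalar $\varepsilon_\ell=\min(1,\varepsilon\,n_\ell^{3/4})$, and then carrying out the ``delicate calculation'' you deferred: the per-level work becomes $w_\ell=\Theta(n^{3/4}2^{\ell/4}/\varepsilon)$ when $\varepsilon_\ell<1$ and $w_\ell=\Theta(n^{3/2}2^{-\ell/2})$ when $\varepsilon_\ell=1$, two geometric series whose sums are dominated by $\mathcal{O}(n/\varepsilon)$ and $\mathcal{O}(n^{3/2})$ respectively, while the space overheads $\varepsilon_\ell 2^\ell$ form a geometrically increasing sequence dominated by $\mathcal{O}(\varepsilon n)$. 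That verification is where the theorem actually lives, and it is the part your proposal acknowledges but does not supply. (Incidentally, you do not need \cref{rem:large-eps}: capping at $\varepsilon_\ell=1$ already removes the $1/\varepsilon$ blow-up at the expensive top levels.)
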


\begin{proof}
    We use a separate \consensus data structure per level of the splitting tree and construct them from top to bottom.
    Consider level $ℓ ∈ \{0,…,d-1\}$ of the tree, which contains $2^ℓ$ nodes responsible for $n_ℓ = 2^{d-ℓ}$ keys each. The probability that a given seed for a splitting hash function amounts to a perfect split for $n_ℓ$ keys is $p(n_ℓ) = \binom{n_ℓ}{n_ℓ/2}·2^{-n_ℓ}$ by a simple counting argument. A standard approximation gives $p(n_ℓ) = Θ(1/\sqrt{n_ℓ})$.

    By applying \cref{thm:full-consensus} with $p₁ = … = p_{2^ℓ} = p(n_ℓ)$, $ε_ℓ = \min(1,ε·n_ℓ^{3/4})$ and $w = 𝒪(\log n)$ we obtain a data structure encoding $2^ℓ$ successful seeds. The expected number of seeds inspected is $𝒪(\frac{1}{p(n_ℓ) ε_ℓ})$ for each node, which corresponds to expected work of $𝒪(\frac{n_ℓ}{p(n_ℓ)ε_ℓ}) = 𝒪(n_ℓ^{3/2}/ε_ℓ)$ per node when seeds are tested by hashing all $n_ℓ$ relevant keys.\footnote{Dividing the work per node by $ε_ℓ$ gives $-n_ℓ^{3/2}/ε_ℓ²$. Our choice of $ε_ℓ \sim n_ℓ^{3/4}$ (as long as $\min(1,·)$ does not kick in) is desired to ensure that using extra bits amounts to roughly the same reduction of work in each level, such that we have a consistent trade-off between time and space.}
    For the total expected work $w_ℓ$ in level $ℓ$ there are, up to constant factors, the following two options.
    \begin{align*}
        \text{if $ε_ℓ = ε·n_ℓ^{3/4}$ then\ } & w_ℓ = \frac{n_ℓ^{3/2}}{ε n_ℓ^{3/4}}·2^ℓ = \frac{n_ℓ^{3/4}2^ℓ}{ε} = \frac{2^{(d-ℓ)·3/4 + ℓ}}{ε} = \frac{n^{3/4}·2^{ℓ/4}}{ε}.\\
        \text{if $ε_ℓ = 1$ then\ } & w_ℓ = n_ℓ^{3/2}·2^ℓ = 2^{(d-ℓ)·3/2+ℓ} = n^{3/2}·2^{-ℓ/2}.
    \end{align*}
    The formula for the first case is maximised for $ℓ = d-1$ giving $w_{d-1} = Θ(n/ε)$. The formula for the second case is maximised for $ℓ = 0$ giving $w₀ = n^{3/2}$. Since both describe geometric series we get $\sum_{ℓ = 0}^{d-1} w_ℓ = 𝒪(w₀+w_{d-1}) = 𝒪(n^{3/2} + n/ε) = 𝒪(\max(n^{3/2},n/ε))$ as desired.

    The space requirement for level $ℓ$ is $|M_ℓ| = 2^{ℓ}·\log₂(1/p(n_ℓ)) + ε_ℓ 2^{ℓ} + 𝒪(\log n)$. To bound the sum $\sum_{ℓ = 0}^{d-1} |M_ℓ|$ we look at the terms in $|M_ℓ|$ separately. For the first, we will use that $\prod_{ℓ = 0}^{d-1} p(2^{d-ℓ})^{2^ℓ} = \frac{n!}{n^n}$, which can be checked by induction: $d = 0$ is clear. For $d > 0$ we have
    \begin{align*}
        &\prod_{ℓ = 0}^{d-1} p(2^{d-ℓ})^{2^ℓ}
        = p(n)·\prod_{ℓ = 1}^{d-1} p(2^{d-ℓ})^{2^ℓ}
        = p(n)·\Big(\prod_{ℓ = 1}^{d-1} p(2^{d-ℓ})^{2^{ℓ-1}}\Big)²\\
        &= p(n)·\Big(\prod_{ℓ = 0}^{d-2} p(2^{d-1-ℓ})^{2^{ℓ}}\Big)²
        \textrel{Ind.}{=} p(n) · \Big(\frac{(n/2)!}{(n/2)^{n/2}}\Big)²
        = \binom{n}{n/2} 2^{-n} · \Big(\frac{(n/2)!}{(n/2)^{n/2}}\Big)² = \frac{n!}{n^n}.
    \end{align*}
    The leading terms in $|M_ℓ|$ therefore add up to
    \[
        \sum_{ℓ = 0}^{d-1} 2^{ℓ}·\log₂(1/p(n_ℓ))
        = 
        \log₂\Big(1/\prod_{ℓ = 0}^{d-1} p(n_ℓ)^{2^{ℓ}}\Big) = \log₂\big(\frac{n^n}{n!}\big) = n \log₂ e - 𝒪(\log n).
    \]
    The overhead terms $ε_ℓ · 2^{ℓ}$ are geometrically increasing in $ℓ$. This is because
    \[ \frac{ε_{ℓ+1}·2^{ℓ+1}}{ε_ℓ · 2^{ℓ}} = \frac{\min(1,ε·n_{ℓ+1}^{3/4})}{\min(1,ε·n_ℓ^{3/4})}·2 = \frac{\min(1,ε·(2^{d-ℓ-1})^{3/4})}{\min(1,ε·(2^{d-ℓ})^{3/4})}·2 ≥ 2^{-3/4}·2 = 2^{1/4} > 1. \]
    The sum of the overhead terms is therefore dominated by the last term $ε_{d-1}2^{d-1} ≤ εn·2^{-1/4}$ giving $\sum_{ℓ = 0}^{d-1} ε_ℓ·2^{ℓ} ≤ εn/(2^{1/4}-1) < 6εn$.
    Overall we get as claimed
    \[ \textrm{space} = \sum_{ℓ = 0}^{d-1} |M_ℓ| ≤ n\log₂ e + 6εn + 𝒪(\log² n).\]
    A query needs to extract one seed from each of the $d = \log₂ n$ \consensus data structures. Since each uses uniform values for $p₁ = … = p_{n_ℓ}$, the product $\prod_{j = 1}^{i} p_j = p(n_ℓ)^i$ is easy to compute in constant time, which is sufficient to find each seed in constant time, so we get a query time of $𝒪(d)$ in total.
\end{proof}

\subsection{Bucketed \consensus-RecSplit}
\label{sec:bucketed_consensus_recsplit}
We now replace the upper layers of Monolithic \consensus-RecSplit with a minimal $k$-perfect hash function \cite{hermann2025engineering}.
This reduces query time as there are fewer layers to traverse and reduces construction time because the splits in the top layers are the most expensive ones to compute.
Most importantly, it enables efficiently handling input sizes that are not powers of two.

For any $k ∈ ℕ$ a minimal $k$-perfect hash function for a set $S$ of $n$ keys is a data structure mapping the elements of $S$ to buckets indexed with $[⌈n/k⌉]$ such that each bucket $i ∈ [⌊n/k⌋]$ receives exactly $k$ keys. If $k$ does not divide $n$ then bucket $⌈n/k⌉$ receives between $1$ and $k-1$ leftover keys.
It has long been known that the space lower bound for such a data structure is $≈ n(\log₂e- \frac 1k\log₂ \frac{k^k}{k!})$ \cite{belazzougui2009hash}, shown to be $≈ \frac{n}{2k} \log 2πk$ in \cite{KLS:PaCHash:2023}. In
\iffullversion
\cref{sec:k-perfect}
\else
the full version of this paper \cite{lehmann2025combinedArxiv}
\fi
we outline a compact minimal $k$-perfect hash function with, in expectation, $𝒪(\frac{n}{k} \log k)$ bits of space, constant query time, and linear construction time.

We call our resulting MPHF Bucketed \consensus-RecSplit. Its performance is as follows (previously stated in \cref{sec:mphf-results}).

\bucketedCRS*

\begin{proof}
    The idea is to use our monolithic construction but replace the top-most layers with a $k$-perfect hash function. More precisely, we choose $k = 2^{⌊\frac{4}{3} \log₂ 1/ε⌋} = Θ(ε^{-4/3})$ and construct a minimal $k$-perfect hash function $F$ for the input set. As discussed in
    \iffullversion
    \cref{lem:k-mphf},
    \else
    the full version,
    \fi
    the construction time of $F$ is $𝒪(n)$, the expected query time is $𝒪(1)$ and the space consumption is $𝒪(\frac{n}{k} \log k) = 𝒪(εn)$ bits, all well within the respective budgets. The resulting $⌊n/k⌋$ buckets of $k$ keys each (a power of $2$) are recursively split like in our monolithic data structure, which takes at most $\OPTMPHF+𝒪(εn)$ bits of space (the $𝒪(\log²n)$ term is dominated by $𝒪(εn)$). Since we make no assumptions on $n$ there might be up to $k-1$ leftover keys assigned to an extra bucket. For these we construct a separate MPHF $F'$ with range $\{⌊n/k⌋·k+1,…,n\}$. Since $k \sim ε^{-4/3} = ε·ε^{-7/3} ≤ εn$ pretty much any compact (not necessarily succinct) MPHF is good enough here, provided it supports queries in logarithmic time. Queries evaluate $F$ first and, if mapped to a regular bucket, evaluate a sequence of $\log₂ k$ splits, or, if mapped to the extra bucket, evaluate $F'$. This takes $𝒪(\log k) = 𝒪(\log 1/ε)$ time in the worst case.
    
    Concerning construction time, note that the top-most level in use has nodes of size $n_ℓ = k ≤ ε^{-4/3}$ and the definition of $ε_ℓ$ we made simplifies to $ε_ℓ = n_ℓ^{3/4}·ε$ without the “$\min(1,·)$”. By an argument in \cref{thm:improved-recsplit-monolithic}, the construction time is then dominated by the bottom layer and bounded by $𝒪(n/ε)$ as claimed.
\end{proof}

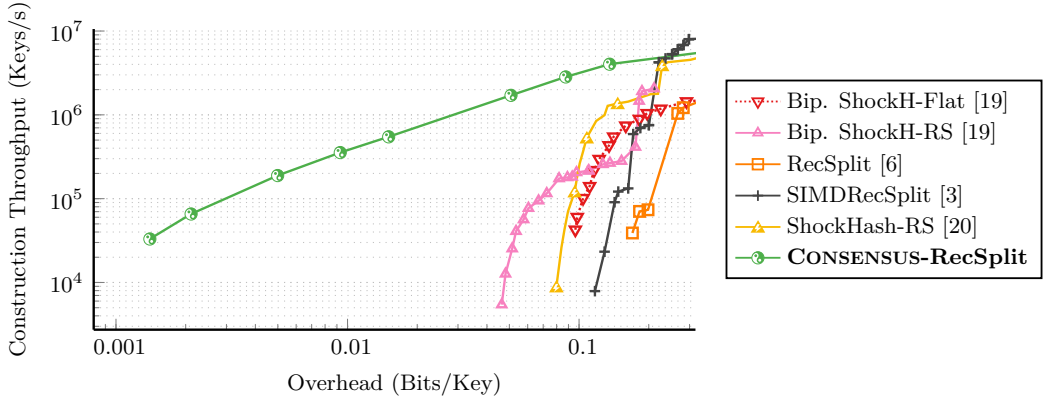
\begin{figure}[t]
  \centering
      \begin{tikzpicture}
        \begin{axis}[
            width=8cm,
            height=4cm,
            xlabel={Overhead (Bits/Key)},
            ylabel={Construction Throughput (Keys/s)},
            xmode=log,
            ymode=log,
            xmax=0.255,
            xmin=0.001,
            log x ticks with fixed point,
            legend style={at={(1.05,0.50)},anchor=west},
            legend columns=1,
          ]
          \addplot[mark=flippedTriangle,color=colorBipartiteShockHashFlat,densely dotted] coordinates { (0.09637,42552) (0.09825,60275.1) (0.1041,101340) (0.1111,140647) (0.11608,218323) (0.12213,296520) (0.13475,432135) (0.14105,548884) (0.15916,742159) (0.18205,892045) (0.19699,1.0443e+06) (0.22537,1.18029e+06) (0.29117,1.44396e+06) (0.4089,1.55902e+06) };
          \addlegendentry{Bip. ShockH-Flat \cite{lehmann2024shockhash2}}
          \addplot[mark=triangle,color=colorBipartiteShockHash,solid] coordinates { (0.04635,5443.92) (0.04813,12673.7) (0.05143,25335.9) (0.05344,40862.2) (0.05761,56383.4) (0.06046,77013.2) (0.06686,94480) (0.07252,115402) (0.08197,173902) (0.08907,179131) (0.09342,181749) (0.09751,205431) (0.1098,214964) (0.11022,215494) (0.12677,256675) (0.13598,266030) (0.15253,280599) (0.1752,413032) (0.18178,1.46064e+06) (0.18696,1.9069e+06) (0.2108,2.05389e+06) };
          \addlegendentry{Bip. ShockH-RS \cite{lehmann2024shockhash2}}
          \addplot[mark=square,color=colorRecSplit,solid] coordinates { (0.16999,38960.3) (0.18251,70474.6) (0.19856,73704.7) (0.2669,1.04029e+06) (0.28189,1.21714e+06) (0.33348,1.45575e+06) (0.34831,1.85694e+06) (0.4198,2.3092e+06) (0.43527,3.11265e+06) (0.50207,3.70178e+06) (0.5068,4.11421e+06) (0.572,5.00952e+06) (0.64938,5.42446e+06) (0.77291,5.98444e+06) (0.85055,6.65557e+06) (1.48514,6.77736e+06) (1.80332,7.227e+06) };
          \addlegendentry{RecSplit \cite{esposito2020recsplit}}
          \addplot[mark=+,color=colorSimdRecSplit,solid] coordinates { (0.11709,7888.2) (0.12888,23305.1) (0.14239,90695.8) (0.1477,121433) (0.16304,132587) (0.17132,589470) (0.18403,695696) (0.20014,754273) (0.22108,4.23263e+06) (0.23627,4.70987e+06) (0.2521,5.27232e+06) (0.26698,6.02882e+06) (0.28262,6.79948e+06) (0.29791,7.97194e+06) (0.3473,8.08407e+06) (0.36251,9.7144e+06) (0.43533,1.17343e+07) (0.50242,1.27584e+07) (0.57024,1.37438e+07) (0.62936,1.39451e+07) (0.65447,1.47341e+07) (0.70951,1.49098e+07) (1.01491,1.6239e+07) (1.13064,1.755e+07) (1.3997,1.8305e+07) (1.51757,1.83184e+07) };
          \addlegendentry{SIMDRecSplit \cite{bez2023high}}
          \addplot[mark=shockhash,color=colorShockHash,solid,mark repeat*=4] coordinates { (0.07982,8904.89) (0.08393,26133.4) (0.08648,42613.2) (0.0896,70371) (0.09574,120974) (0.0979,176158) (0.10063,263134) (0.10389,358873) (0.10794,532598) (0.11858,845594) (0.1288,992142) (0.13292,1.2811e+06) (0.14675,1.36687e+06) (0.16292,1.43808e+06) (0.20372,1.76473e+06) (0.21927,1.78272e+06) (0.22848,3.88078e+06) (0.24247,4.2508e+06) (0.30307,4.53535e+06) (0.34907,5.02892e+06) (0.36406,5.64207e+06) (0.43072,6.16599e+06) (0.50667,6.39386e+06) (0.58887,6.51254e+06) (0.67878,6.93529e+06) };
          \addlegendentry{ShockHash-RS \cite{lehmann2023shockhash}}
          \addplot[mark=consensus,color=colorConsensus] coordinates { (0.0014,33063.8) (0.00211,65692.6) (0.00499,188935) (0.0093,355245) (0.01505,547369) (0.05081,1.71113e+06) (0.08757,2.83583e+06) (0.13551,4.02609e+06) (0.6018,6.80689e+06) };
          \addlegendentry{\textbf{\consensus-RecSplit}}
        \end{axis}
    \end{tikzpicture}
  \caption{
      Experimental evaluation of very space-efficient MPHF algorithms, showing the trade-off between space consumption and construction throughput.
      Because \consensus-RecSplit is focused on space consumption, we only include approaches that achieve below 1.6 bits per key.
  }
  \label{fig:paretoMphf}
\end{figure}

\subsection{Experiments}
To demonstrate that \consensus-RecSplit is viable in practice, we give an implementation and compare it to other MPHF constructions from the literature.%
\footnote{We run single-threaded experiments on an Intel i7 11700 processor with a base clock speed of 2.5\,GHz.
The sizes of the L1 and L2 data caches are 48\,KiB and 512\,KiB per core, and the L3 cache has a size of 16\,MiB.
The page size is 4\,KiB.
We use the GNU C++ compiler version 11.2.0 with optimization flags \texttt{-O3 -march=native}.
Like previous papers \cite{lehmann2025modern,lehmann2024shockhash2,lehmann2023shockhash,hermann2024phobic,bez2023high}, we use 100 million random string keys of uniform random length $[10..50]$ as input.
The source code is available on GitHub \cite{sourceCodeMphfExperiments,sourceCodeConsensusRecSplit}.
}
We only include the most space-efficient previous approaches with space consumption below 1.6 bits per key and refer to \cite{lehmann2024fast,lehmann2025modern} for a detailed comparison of state-of-the-art perfect hashing approaches with larger space consumption.
Our implementation departs from the algorithmic description in \cref{sec:bucketed_consensus_recsplit} by using the threshold-based $k$-perfect hash function \cite{lehmann2024shockhash2,hermann2025engineering}%
\iffullversion
\ instead of the one outlined in \cref{sec:k-perfect}.
\else
.
\fi
We postpone further tuning and parallelisation efforts to a future paper.

\Cref{fig:paretoMphf} shows the trade-off between space consumption and construction time of \consensus-RecSplit.
In
\iffullversion
\cref{sec:experimentsTable},
\else
the full version \cite{lehmann2025combinedArxiv}
\fi
we also give a table.
The performance of all previous approaches degrades abruptly for smaller overheads because of their exponential dependency on $1/\varepsilon$ while \consensus-RecSplit shows an approximately linear dependence on $1/ε$ as predicted in \cref{thm:improved-recsplit-bucketed}.
The space lower bound for MPHFs is $\log_2 e \approx 1.443$ bits per key.
The previously most space-efficient approach, bipartite ShockHash-RS \cite{lehmann2024shockhash2}, achieves 1.489 bits per key.
In about 15\% of the construction time, \consensus-RecSplit achieves a space consumption of just 1.444 bits per key.
At a space consumption of 1.489 bits per key, \consensus-RecSplit is more than 350 times faster to construct than the previous state of the art.

Queries take 150--250\,ns depending on $k$ and therefore the depth of the splitting tree %
\iffullversion
(see \cref{sec:experimentsTable})%
\else
(see full version)%
\fi
, so their performance is not too far from compact configurations of RecSplit~(100\,ns) and bipartite ShockHash-RS~(125\,ns).
Future work can improve on the query performance by storing the seeds needed for each query closer to each other.
We can adapt bucketed \consensus-RecSplit by constructing one \consensus data structure storing seeds bucket by bucket instead of an independent data structure for each layer.
Preliminary experiments show up to 30\% faster queries with moderately slower construction at the same space consumption.
Splittings higher up in the tree not only have a smaller success probability, but are also more costly to test.
We are therefore incentivised to select a different space overhead parameter $\varepsilon$ for each splitting within the same \consensus data structure.
We leave this generalisation of \consensus to non-uniform $\varepsilon$ for future work.

\section{Conclusion}\label{sec:conclusion}

This paper concerns data structures that store sequences of successful seeds, i.e.\ values known to influence pseudo-random processes in desirable ways. Normally these seeds are first found independently of each other using trial and error and then encoded separately. In this paper we show that, perhaps surprisingly, this wastes $Ω(1)$ bits per seed compared to a lower bound.

We present the \consensus approach, which \textbf{co}mbi\textbf{n}es \textbf{s}earch and \textbf{en}coding of \textbf{su}ccessful \textbf{s}eeds and reduces the space overhead to $𝒪(ε)$ bits at the price of increasing the required work by a factor of $𝒪(1/ε)$.

To demonstrate the merits of our ideas in practice, we apply it to the construction of minimal perfect hash functions.
\consensus-RecSplit is the first MPHF to achieve a construction time that only depends linearly on the inverse space overhead.
Despite using the same recursive splitting idea as \cite{esposito2020recsplit}, the space overhead is, for the same construction throughput, up to two orders of magnitude smaller in practice.

\subparagraph{Future Work.}
Given our promising results, we intend to continue working on \consensus-RecSplit, in particular looking into parallelisation and improving query times.
We also believe that \consensus can be applied to other randomised data structures, including but not limited to other perfect or $k$-perfect hash functions, retrieval data structures and AMQ-filters.
We give a first result on $k$-perfect hash functions using \consensus in \cite{hermann2025engineering}.

\bibliography{paper}

\iffullversion
\clearpage
\appendix
\crefalias{section}{appendix}

\section{Lower Bounds}
\label{sec:lower-bounds-analysis}

We now prove that the values defined in \cref{eq:lower-bounds} indeed constitute  lower bounds as the naming suggests.
\begin{lemma}[Information Theoretic Lower Bounds]\ \\
    \label{lem:lower-bounds}
    Solving the SSEP requires $𝔼[|M|] ≥ M^\OPT$ and $𝔼[T_i] ≥ T_i^\OPT$ for all $i ∈ [n]$.
\end{lemma}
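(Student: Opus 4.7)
I would prove the two inequalities by two short, independent arguments, both based on standard probabilistic/information-theoretic observations.

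For the per-process bound $\mathbb{E}[T_i] \geq 1/p_i$, the key observation is that in order to \emph{certify} $B_{S_i}^{(i)} = 1$, the algorithm must have actually inspected $S_i$: if $S_i$ were an uninspected index, the value $B_{S_i}^{(i)}$ would still be an untouched Bernoulli$(p_i)$, contradicting deterministic success. Let $Y_i := \sum_{j \text{ inspected for process } i} B_j^{(i)}$ denote the number of observed ones; then $Y_i \geq 1$ almost surely, so $\mathbb{E}[Y_i] \geq 1$. On the other hand, for any fixed $j$, the indicator "trial $j$ of process $i$ is inspected" is a measurable function of random variables other than $B_j^{(i)}$ (the algorithm must commit to inspecting $j$ before it learns its value), hence independent of $B_j^{(i)}$. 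Linearity of expectation gives $\mathbb{E}[Y_i] = \sum_j \Pr[j \text{ inspected}]\cdot p_i = p_i \cdot \mathbb{E}[T_i]$, from which $\mathbb{E}[T_i] \geq 1/p_i = T_i^{\mathrm{OPT}}$ is immediate.

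For the space bound $\mathbb{E}[|M|] \geq M^{\mathrm{OPT}}$, the plan is to combine Shannon's source-coding inequality with a pointwise upper bound on the probability mass of $(S_1,\dots,S_n)$. Treating $M$ as a prefix-free code (a standard self-delimiting wrapper costs only $O(\log|M|)$ bits and does not affect the leading-order bound), one has $\mathbb{E}[|M|] \geq H(M)$. Since the decoder reconstructs $(S_1,\dots,S_n)$ from $M$ together with the public parameters, the tuple is a deterministic function of $M$, so $H(M) \geq H(S_1,\dots,S_n)$. Now, for any fixed $(s_1,\dots,s_n) \in \mathbb{N}_0^n$, the event $\{(S_1,\dots,S_n) = (s_1,\dots,s_n)\}$ implies $B_{s_i}^{(i)} = 1$ for every $i$, and by independence of the Bernoulli processes this event has probability at most $\prod_{i=1}^n p_i$. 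Consequently every atom of the distribution of $(S_1,\dots,S_n)$ has mass at most $\prod_i p_i$, which yields
\[ H(S_1,\dots,S_n) \;\geq\; \log_2\!\Big(1\big/\prod_i p_i\Big) \;=\; \sum_{i=1}^n \log_2(1/p_i) \;=\; M^{\mathrm{OPT}}. \]

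The main ``obstacle'' here is really just bookkeeping: justifying the independence used in the time bound (which boils down to the natural causal assumption that the algorithm cannot peek at a trial before inspecting it) and accommodating variable-length $M$ in the space bound (handled by a self-delimiting encoding). Both proofs are short and clean; I expect the full write-up to be under a page.
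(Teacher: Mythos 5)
Your proof is correct. The space bound is essentially the paper's argument: both rest on the pointwise bound $\Pr[(S_1,\dots,S_n)=(s_1,\dots,s_n)]\le\prod_i p_i$; the paper applies it directly to $\Pr[M=m]$ (since $m$ determines the seeds), whereas you route it through $H(S_1,\dots,S_n)$ and the data-processing inequality $H(M)\ge H(S_1,\dots,S_n)$ — same substance, slightly different packaging. For the time bound the paper argues, in one line, that the algorithm must inspect variables from $\vec B^{(i)}$ until the first success, which is $\Geom(p_i)$-distributed; you instead give a Wald-type argument, $\mathbb{E}[Y_i]=p_i\,\mathbb{E}[T_i]$ together with $Y_i\ge 1$ almost surely. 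Your version is a bit more careful: it makes explicit the key point that a correct algorithm cannot output an uninspected index as $S_i$ (since an uninspected $B^{(i)}_{S_i}$ is still a fresh $\Ber(p_i)$), and it sidesteps any reasoning about the order in which indices are inspected. The paper's version is shorter but leaves that observation implicit. Both are valid and yield the identical bound.
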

\begin{proof}
    Any outcome $m$ of the random variable $M$ encodes a sequence $s₁,…,sₙ$ of seeds, which are jointly successful with probability $p₁·…·pₙ$, hence $\Pr[M = m] ≤ p₁·…·pₙ$. We can bound the expected size of $M$ using its entropy $H(M)$ as follows\footnote{$𝔼[|M|] ≥ H(M)$ assumes that $M$ properly encodes its content by being, e.g. a prefix free code.}
    \begin{align*}
        𝔼[|M|]
        &≥ H(M)
        = \sum_{\mathclap{m ∈ \{0,1\}^*}} \Pr[M = m]\log₂\Big(\frac{1}{\Pr[M=m]}\Big)\\
        &≥ \sum_{\mathclap{m ∈ \{0,1\}^*}} \Pr[M = m]\log₂\Big(\frac{1}{p₁·…·pₙ}\Big)
        = \sum_{i = 1}^n \log₂ 1/p_i = M^\OPT.
    \end{align*}
    Concerning $T_i$, any correct approach must inspect random variables from $\vB{i}$ until at least the first success. In expectation this takes $1/p_i$ trials and $𝔼[T_i] ≥ 1/p_i = T_i^\OPT$ follows.
\end{proof}

\section{Properties of the Ad-Hoc Approaches}
\label{s:adhoc-proofs}

For completeness, we prove the claims we made about  the performance of the approaches MIN and UNI considered in the introduction.
\def\seeApp{}
\MINperf*
\begin{proof}[Proof of \cref{lem:ad-hoc:min}]
    It is clear that $T_i \sim \Geom(p_i)$ with $𝔼[T_i] = 1/p_i = T_i^\OPT$.
    The entropy of $X \sim \Geom(p)$ can be bounded as follows using $-\ln(1-p) ≥ p$ for $p ∈ (0,1)$.
    \[
        H(X) = \frac{-p\log₂p-(1-p)\log₂(1-p)}{p}
        ≥ -\log₂ p + (1-p)\log₂e ≥ \log₂ 1/p + 1-p.
    \]
    Since $M$ encodes a sequence of independent $S₁,…,Sₙ$ with $S_i \sim \Geom(p_i)$ we have
    \[
        𝔼[|M|] ≥ H(M) = \sum_{i = 1}^n H(S_i) ≥ \sum_{i = 1}^n \log 1/p_i + \sum_{i = 1}^n 1-p_i = M^\OPT + \sum_{i = 1}^{n} 1-p_i.\qedhere
    \]
\end{proof}

\UNIperf*
\begin{proof}[Proof of \cref{lem:ad-hoc:uni}]
    Using $\frac{x \ln x}{x-1} ≤ 1$ for $x ∈ (0,1)$ we can upper bound the entropy of $X \sim \Geom(p)$ as
    \[
        H(X) = \frac{-p\log₂p-(1-p)\log₂(1-p)}{p}
        ≤ -\log₂ p + \log₂e,
    \]
    which we can apply to $S_* \sim \Geom(p₁·…·pₙ)$ to get, using a suitable encoding $M$ of $S_*$:
    \[
        𝔼[|M|] = H(S_*) + 𝒪(1) = -\log₂(p₁·…·pₙ) + 𝒪(1) = M^\OPT + 𝒪(1).
    \]
    The number of options that need to be considered is $𝔼[S_*]$, where considering an option requires inspecting a random variable from at least one Bernoulli process. Hence
    \[ 𝔼[\sum_{i = 1}^n T_i] ≥ 𝔼[S_*] = \prod_{i = 1}^n \frac 1{p_i}.\qedhere\]
\end{proof}

\section{On Interactions Between Seeds}
\label{sec:SSSEP}
In some data structures, including RecSplit, on which our \cref{thm:improved-recsplit-monolithic,thm:improved-recsplit-bucketed} are based, seeds in a sequence $S₁,…,S_n$ need to be determined one after the other and the choice of $S₁,…,S_i$ may influence which choices for $S_{i+1}$ are successful. We sidestep the issue in \cref{thm:improved-recsplit-monolithic} by solving a sequence of SSEP instances, but alternatively we could have considered the following generalisation of the SSEP.

\begin{definition}[The Sequential Seed Search and Encoding Problem (SSSEP)]
    Let $n ∈ ℕ$ and $p₁,…,pₙ ∈ (0,1]$. Assume for each $i ∈ [n]$, each $(s₁,…,s_i) ∈ ℕ₀^{i}$ and each $j ∈ ℕ₀$ there is a random variable $B^{(s₁,…,s_{i-1})}_{j} \sim \Ber(p_i)$ such that any subset of these variables arising from a set of pairwise distinct $(i,j)$ is independent. The task is to craft a bitstring $M$ that encodes for each $i ∈ [n]$ a seed $S_i$ such that $\B{S₁,…,S_{i-1}}_{S_i} = 1$.
\end{definition}
More plainly, the SSSEP acknowledges that whether or not a seed is successful in position $i$ may depend on the choice of previous seeds, but insists that a fresh seed $s_i$, i.e.\ one not previously inspected at index $i$, yields a fresh chance for success.

It is not hard to check that our \consensus algorithms are applicable in the generalised setting: Simplified \consensus never tries the same seed in the same location and while full \consensus might do so, we treat this as a failure case in our analysis. Hence:
\begin{corollary}
    \label{cor:sssep}
    The algorithms for the SSEP analysed in this paper offer the same guarantees when applied to the SSSEP.
\end{corollary}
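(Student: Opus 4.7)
The plan is to verify that the entire analyses of Simplified and Full \consensus (Sections~\ref{sec:simplified-consensus-analysis} and~\ref{sec:full-consensus-analysis}) transfer to the SSSEP setting by reinterpreting which Bernoulli is consulted at each DFS step. Specifically, at a level-$i$ tree node labeled $(\sigma_0, \ldots, \sigma_i)$, define $S_k := \sigma_0 \circ \ldots \circ \sigma_k$ for $k \leq i$ (for Full \consensus, replace this by the $w$-bit suffix as in the original algorithm) and let the algorithm inspect $B^{(S_1, \ldots, S_{i-1})}_{S_i}$ in place of the SSEP variable $B^{(i)}_{S_i}$. The existing proofs only use mutual independence of the Bernoullis inspected during one traversal, so it suffices to verify that this independence is still supplied by the SSSEP's weaker assumption---which provides independence exactly across variables with pairwise distinct $(i, j)$ pairs.

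For Simplified \consensus, each tree node is visited at most once by the DFS, and distinct nodes at level $i$ have distinct addresses $\sigma_0 \circ \ldots \circ \sigma_i$. Hence the inspected Bernoullis always have pairwise distinct attempt indices $j = S_i$, and the SSSEP delivers precisely the independence that the analysis demands. In particular, the recursion $q_i = 1 - (1 - p_i q_{i+1})^{k_i}$ for the probability of an unbroken path from a level-$i$ node, together with \cref{lem:simplified-consensus,lem:iterative-applications-of-functions}, goes through unchanged, so \cref{thm:simplified-consensus} holds verbatim for the SSSEP.

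For Full \consensus, the only complication is that two distinct level-$i$ tree nodes can share the same $w$-bit suffix, producing two inspections with the same $(i, j)$ pair but different prefixes---precisely the regime that the SSSEP does not cover. But this is the suffix-repetition event already quantified in \cref{claim:repeating-suffix-prob} and absorbed into the failure case in the proof of \cref{thm:full-consensus}. Conditioned on the event $E$ defined there, no suffix is repeated within the first $T_{\mathrm{limit}}$ steps, so every inspected Bernoulli has a fresh $(i, j)$ pair and the SSSEP independence applies. I expect the main obstacle to be the mundane bookkeeping needed to confirm that no step of the Section~\ref{sec:simplified-consensus-analysis} or~\ref{sec:full-consensus-analysis} analyses covertly relies on independence across variables with the same $(i, j)$ but different prefixes; once that is checked, the conclusions of \cref{thm:simplified-consensus,thm:full-consensus} both carry over and the corollary follows.
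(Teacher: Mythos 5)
Your proposal is correct and follows essentially the same approach as the paper: for Simplified \consensus you observe that the DFS never revisits a node, so all inspected Bernoulli variables have pairwise distinct $(i,j)$ pairs and the SSSEP's independence assumption suffices; for Full \consensus you correctly identify suffix-repetition as the only place where duplicate $(i,j)$ pairs (with differing prefixes) could arise and note that this is already absorbed into the failure event $E^c$ in \cref{thm:full-consensus}. The paper's own proof is a one-sentence version of exactly this argument.
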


\section{Notes on \consensus with \texorpdfstring{$ε > 1$}{ε > 1}}
\label{sec:large-eps}

We now provide some technical considerations relating to the claim made in \cref{rem:large-eps}. The issue is that the bound $q_i = Ω(ε)$ as stated in \cref{lem:simplified-consensus} clearly fails for $ε = ω(1)$. However, a simpler argument can be used to obtain a different bound on $q_i$ when $ε ≥ 2$ (for $ε ∈ (1,2)$ the old argument still works). The assumption from \cref{thm:simplified-consensus} guarantees $k_i ≥ 2^{ε-1}/p_i$ for all $i ∈ [n]$ and the following induction shows $q_i ≥ 1-\exp(-2^{ε-2})$ and thus also $q_i ≥ \frac{1}{2}$. As before, the induction works backwards from $q_{n+1} = 1$.
\begin{align*}
    q_i &= 1-(1-p_iq_{i+1})^{k_i}
    ≥ 1-\exp(-p_i q_{i+1} k_i)
    ≥ 1-\exp(-q_{i+1} 2^{ε-1})
    \textrel{Ind.}{≥} 1-\exp(-2^{ε-2}).
\end{align*}
In simplified \consensus we get $T_i \sim \Geom(p_i q_{i+1})$ as before and using $\frac{1}{1-x} = 1+𝒪(x)$ for $x ∈ [0,\frac 12]$ we get
\[
    𝔼[T_i]
    = \frac{1}{p_i q_{i+1}}
    = \frac{T_i^\OPT}{q_{i+1}}
    ≤ \frac{T_i^\OPT}{1-\exp(-2^{ε-2})}
    ≤ T_i^\OPT (1+\exp(-Ω(2^{ε})).
\]
The additional term $n^{-c}$ in the statement of \cref{rem:large-eps} relates to the conditioning on $E$ that is required in our analysis of full \consensus.

\section{Compact \texorpdfstring{$k$}{k}-Perfect Hashing}
\label{sec:k-perfect}

The idea of $k$-perfect hashing has been around for a long time \cite{belazzougui2009hash,gonnet1988external,larson1985external}, mostly in the context of non-minimal external memory hash tables.
However, we could not find a description of a minimal $k$-perfect hash function data structure combined with compactness guarantees.
We outline one possibility here, based on a specialization of PaCHash \cite{KLS:PaCHash:2023}.
Note that our implementation of \consensus-RecSplit uses a more sophisticated minimal $k$-perfect hash function with better constant factors. We intend to elaborate on the details in a follow-up paper.

\begin{lemma}
    \label{lem:k-mphf}
    Given $k ∈ ℕ$, there exists a minimal $k$-perfect hash function with expected $𝒪(1)$-time queries, space $𝒪(\frac{n}{k}\log k)$ and construction time $𝒪(n)$.
\end{lemma}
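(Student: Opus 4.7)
The plan is to adapt the PaCHash idea~\cite{KLS:PaCHash:2023}. I would apply a random hash function $h : U \to [N]$ with $N = \Theta(n)$, sort the $n$ input keys by $(h(x), x)$ using lexicographic tie-breaking, and declare the $i$-th output bucket to consist of the keys whose ranks lie in $\{ik+1, \ldots, (i+1)k\}$ in that order (the final bucket possibly being shorter when $k \nmid n$). By construction this assignment is already minimal and $k$-perfect, so the task reduces to designing a compact representation of it that admits $\mathcal{O}(1)$-time queries.

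The primary component I would store is the non-decreasing sequence of boundary hash values $s_i := h(x_{ik+1})$ for $i = 0, 1, \ldots, \lceil n/k\rceil - 1$, encoded with Elias--Fano and its standard rank/select index. As a monotone sequence of $\lceil n/k\rceil$ integers drawn from $[N] = [\Theta(n)]$, the space usage is $\frac{n}{k}\bigl\lceil\log\tfrac{Nk}{n}\bigr\rceil + \mathcal{O}(n/k) = \mathcal{O}(\frac{n}{k}\log k)$ bits, and predecessor queries run in $\mathcal{O}(1)$ time. A query for $x$ would then compute $h(x)$, perform a predecessor lookup to obtain a candidate bucket $j$, and in the common case (when $h(x)$ is strictly greater than $s_j$) simply return $j$.

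The main obstacle is handling keys whose hash coincides with some stored boundary $s_j$: such a key could syntactically belong to bucket $j-1$ or to bucket $j$, depending on the lexicographic tie-break. I would resolve this with a secondary retrieval-style structure that stores one disambiguating bit per ``boundary key''. Because $N = \Theta(n)$ makes every $s_j$ collide with $\mathcal{O}(1)$ extra keys in expectation, the expected total number of boundary keys is $\mathcal{O}(n/k)$, so any standard $\mathcal{O}(1)$-query retrieval scheme fits in $\mathcal{O}(n/k)$ bits --- well within the $\mathcal{O}(\frac{n}{k}\log k)$ budget. The rare multi-collision cases, where three or more consecutive boundaries share a hash value, occur with negligible probability for $N = \Theta(n)$ and can be absorbed by allowing a few extra retrieval bits without changing the asymptotics.

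Finally, I would argue that construction is dominated by a single radix sort of the $n$ integer hash values from $[\Theta(n)]$, running in $\mathcal{O}(n)$ time; the Elias--Fano array on $\mathcal{O}(n/k)$ entries and the secondary retrieval structure on its $\mathcal{O}(n/k)$ keys are each built in time linear in their inputs, so the total construction time is $\mathcal{O}(n)$. Combining the three bounds yields the lemma.
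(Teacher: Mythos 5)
Your proposal is correct and follows essentially the same route as the paper's proof sketch: hash the keys, sort by hash value, bucket by rank, store the monotone sequence of boundary hash values in a compressed predecessor structure (you instantiate with Elias--Fano; the paper cites the PaCHash predecessor structure), and resolve boundary-collision ambiguity with a small retrieval data structure. The only minor difference is the ambiguity handling: the paper stores a $\lceil\log_2(\ell+2)\rceil$-bit disambiguating value per ambiguous key in a variable-width retrieval structure, whereas you use one fixed bit plus a separate argument that higher-order collisions are rare enough to absorb, and both give the same $\mathcal{O}(n/k)$ space bound.
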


\begin{proof}[Proof sketch.]
    Let $S ⊆ 𝒰$ be an input set of size $n$ and $h : 𝒰 → [n]$ a random hash function. Let $(x₁,…,xₙ)$ contain the elements of $S$ sorted by $h(x)$ with ties broken arbitrarily. Our minimal $k$-perfect hash function $F$ will be $F(x_i) = ⌈i/k⌉$. To represent it, we store \emph{threshold} values $t_i = h(x_{k·i})$ for $i = 1,…,⌊n/k⌋$ in a succinct predecessor query data structure $D$. The data structure can achieve expected constant time queries in this setting and space $𝒪(\log \binom{n}{n/k}) = 𝒪(\frac{n}{k} \log k)$ bits \cite{KLS:PaCHash:2023}. A minimal $k$-perfect hashing query for $x ∈ S$ considers the rank $r$ of $h(x)$ in $D$. In most cases we can conclude $F(x) = r+1$. Unfortunately, the answer is ambiguous if $h(x)$ coincides with one or several thresholds $t_{r} = … = t_{r+ℓ}$. For such keys we store a $⌈\log₂(ℓ+2)⌉$ bit value in a retrieval data structure that disambiguates between the possible values $r,…,r+ℓ+1$ for $F(x)$. It is not hard to see that the expected additional space this takes is a lower order term, provided that a compact variable-length retrieval data structure is used, e.g.\ \cite{belazzougui2013compressed}. We omit the details.
\end{proof}

\section{Additional Experimental Data}
\label{sec:experimentsTable}
In \cref{tab:overviewTable} we give a selection of data points from the Pareto fronts in \cref{fig:paretoMphf}.

\begin{table}[t]
  \caption{%
      Comparison of \consensus with other approaches from the literature.
  }
  \label{tab:overviewTable}
  \centering

\addtolength\tabcolsep{-0.5pt}
\begin{centering}
\begin{tabular}[t]{l rrr}
    \toprule
    Technique     & Bits/key    & Construction (ns/key) & Query (ns) \\ \midrule

         Bip. ShockHash-RS, $\ell$=$96$, $b$=$2000$ & 1.500 &  17\,736 & 148 \\
        Bip. ShockHash-RS, $\ell$=$128$, $b$=$2000$ & 1.489 & 183\,691 & 133 \\ \midrule
                   Bip. ShockHash-Flat, $\ell$=$96$ & 1.554 &   7\,110 &  84 \\
                  Bip. ShockHash-Flat, $\ell$=$106$ & 1.541 &  16\,591 &  78 \\ \midrule
                   RecSplit, $\ell$=$8$, $b$=$2000$ & 1.710 &      961 & 156 \\
                  RecSplit, $\ell$=$12$, $b$=$2000$ & 1.613 &  25\,667 & 139 \\ \midrule
              SIMDRecSplit, $\ell$=$12$, $b$=$2000$ & 1.614 &   1\,696 & 151 \\
              SIMDRecSplit, $\ell$=$16$, $b$=$2000$ & 1.560 & 126\,772 & 133 \\ \midrule
         \consensus, $k$=$256$, $\varepsilon$=$0.1$ & 1.578 &      248 & 140 \\
        \consensus, $k$=$512$, $\varepsilon$=$0.06$ & 1.530 &      353 & 151 \\
        \consensus, $k$=$512$, $\varepsilon$=$0.03$ & 1.494 &      584 & 150 \\
     \consensus, $k$=$32768$, $\varepsilon$=$0.006$ & 1.452 &   2\,815 & 217 \\
    \consensus, $k$=$32768$, $\varepsilon$=$0.0005$ & 1.444 &  30\,245 & 199 \\
    \bottomrule
\end{tabular}
\end{centering}

\end{table}

\fi

\end{document}